\documentclass[11pt, centerh1]{article}

\textwidth=165mm \headheight=0mm \headsep=10mm \topmargin=0mm
\textheight=210mm 
\oddsidemargin=0mm

\usepackage{graphicx,colonequals,bm}
\usepackage{natbib}
\usepackage{url}
\usepackage{amsmath,amsthm}
\usepackage{amsfonts}
\usepackage{amssymb}
\usepackage{url}
\usepackage{color}
\usepackage{subfigure}
\usepackage{slashbox}
\usepackage{mathtools}
%

\def\bx{\boldsymbol{x}}
\def\by{\boldsymbol{y}}
\def\bz{\boldsymbol{z}}

\def\bX{\boldsymbol{X}}
\def\bY{\boldsymbol{Y}}

\def\b0{\boldsymbol{0}}
\def\b1{\boldsymbol{1}}
\def\cC{\mathcal{C}}

\def\cF{\mathcal{F}}
\def\cH{\mathcal{H}}

\def\cX{\mathcal{X}}

\def\cZ{\mathcal{Z}}
\def\mR{\mathbb{R}}
\def\mN{\mathbb{N}}
\def\EE{\mathbb{E}}

\def\bmu{\boldsymbol{\mu}}

\def\bpsi{\boldsymbol{\psi}}

\def\bbeta{\boldsymbol{B}}

\def\bvartheta{\boldsymbol{\theta}}
\def\bvartheta{\boldsymbol{\vartheta}}

\def\bpi{\boldsymbol{\pi}}

\def\bSigma{\boldsymbol{\Sigma}}

\def\bPsi{\boldsymbol{\Psi}}
\def\bvartheta{\boldsymbol{\vartheta}}
\def\bTheta{\boldsymbol{\varTheta}}

\newtheorem{Thm}{Theorem}

%
%
\begin{document}

\title{Multivariate Response and Parsimony for Gaussian Cluster-Weighted Models}
\author{Utkarsh J.\ Dang \thanks{Department of Mathematics and Statistics, McMaster University, Hamilton, Ontario, Canada. E-mail: udang@mcmaster.ca.}, Antonio Punzo\thanks{Department of Economics and Business, University of Catania, Catania, Italy. E-mail: antonio.punzo@unict.it.}, Paul D.\ McNicholas\thanks{Department of Mathematics and Statistics, McMaster University, Hamilton, Ontario, Canada. E-mail: mcnicholas@math.mcmaster.ca.}, \\Salvatore Ingrassia\thanks{Department of Economics and Business, University of Catania, Catania, Italy. E-mail: s.ingrassia@unict.it}, Ryan P.\ Browne\thanks{Department of Statistics and Actuarial Science, University of Waterloo, Waterloo, Ontario, Canada. E-mail: rpbrowne@uwaterloo.ca} }
\date{}
\maketitle
\begin{abstract}
A family of parsimonious Gaussian cluster-weighted models is presented. This family concerns a multivariate extension to cluster-weighted modelling that can account for correlations between multivariate responses. Parsimony is attained by constraining parts of an eigen-decomposition imposed on the component covariance matrices. A sufficient condition for identifiability is provided and an expectation-maximization algorithm is presented for parameter estimation. Model performance is investigated on both synthetic and classical real data sets and compared with some popular approaches. Finally, accounting for linear dependencies in the presence of a linear regression structure is shown to offer better performance, vis-\`{a}-vis clustering, over existing methodologies.
\end{abstract}

\section{Introduction}\label{sec:introduction}
Mixture models have seen increasing use over the decade or two, with many important applications in clustering and classification (examples include: \citealp{hennig13}, \citealp{lee13}, \citealp{andrews14}, \citealp{biernacki2014}, \citealp{browne14b}, \citealp{bouveyron2014}, \citealp{lin14}, \citealp{franczak15}, \citealp{wei15}, \citealp{anderlucci15}, \citealp{ohagan16}, and \citealp{mcnicholas16}). Arguably, the most famous model-based clustering methodology is the Gaussian parsimonious clustering models (GPCM) family \citep{Cele:Gova:Gaus:1995}, which is supported by the {\tt mclust} \citep{fraley2002,mclust}, {\tt mixture} \citep{mixture1.1}, and {\tt Rmixmod} \citep{rmixmod} packages for {\sf R} \citep{R2015}. However, such models do not typically account for dependencies via covariates. When there is a clear regression relationship between some variables, important insight can be gained by accounting for functional dependencies between those variables. For such data, traditional model-based clustering methods that fail to incorporate such a relationship may not perform as well. 

Some popular mixture-based methodologies that deal with regression data are finite mixtures of regression \citep[FMR;][]{desarbo1988} and finite mixtures of regression with concomitant variables \citep[FMRC;][]{wedel2002}, supported by the {\tt flexmix} \citep{leisch2004, flexmix} package for {\sf R}. FMR only model the distribution of the response given the covariates, whereas FMRC also model the mixing weights of the components as a multinomial logistic model of some concomitant variables (which are often the covariate variables). However, these methodologies do not explicitly use the distribution of the covariates for clustering, i.e., the assignment of data points to clusters does not directly utilize any information from the distribution of the covariates. Recently, finite mixtures of seemingly unrelated linear regressions have also been proposed \citep{galimberti2015}.

A flexible framework for density estimation and clustering of data with local functional dependencies is represented by the cluster-weighted model \citep[CWM;][]{Gers:Nonl:1997}, also called the saturated mixture regression model by \citet{wedel2002}. CWMs were first investigated in a general statistical mixture framework by \citet{ingrassia2012}. The same paper presented theoretical and numerical properties, and discussed the performance of the model under both Gaussian and $t$-distributional assumptions \citep[see also][]{Ingr:Mino:Punz:Mode:2014}. As opposed to FMR and FMRC, CWMs allow for \textit{assignment dependence} \citep[cf.][]{Henn:Iden:2000}, i.e., the assignment of an observation to a cluster is also dependent on the distribution of the covariates. In such models, the component covariate distributions can be distinct; for a discussion about the difference between FMR, FMRC, and CWMs from a geometrical point of view, see \citet{Ingr:Punz:Deci:2016}. Some extensions of this methodology have dealt with non-linear local relationships \citep{Punz:Flex:2014}, high dimensional covariates \citep{Sube:Punz:Ingr:McNi:Clus:2013,Sube:Punz:Ingr:McNi:Clus:2015}, and various response types \citep{Ingr:Punz:Vitt:TheG:2014,Punz:Ingr:OnTh:2013,Punz:Ingr:Clus:2015}.

In this paper, a family of parsimonious Gaussian CWMs is presented. 
It concerns multivariate response, while CWMs have so far only dealt with univariate response. Note that parsimony is vital in real data applications and this is the first time that CWMs are being used with eigen-decomposed covariance structures. Regarding multivariate response in the FMR and FMRC framework, using multivariate response is possible in \texttt{flexmix} but the package currently does not account for correlated response variables, i.e., these models assume independence between the response variables. FMR models that deal with correlated response variables have been recently proposed \citep{soffritti2011, galimberti2013}, but these models do not decompose the covariance structure nor do they use information from the distribution of the covariates. Furthermore, these papers do not investigate FMRC models that deal with correlated responses. Families of eigen-decomposed parsimonious FMR and FMRC models that can account for correlated response variables have been recently proposed \citep[eFMR and eFMRC;][]{dang2015}; however, these models do not take into account the distribution of the covariates. 

For the proposed multivariate response CWM, parsimonious models are developed by constraining parts of an eigen-decomposition imposed on the component covariance matrices of both the responses and the covariates. This family of parsimonious models is referred to as the eigen-decomposed multivariate response CWM (eMCWM).
The Bayesian information criterion \citep[BIC;][]{schwarz1978} and the integrated completed likelihood \citep[ICL;][]{biernacki2000} are considered for model selection on this family. Comparisons to FMR, FMRC, eFMR, eFMRC, and GPCMs are made. Note that, hereafter, FMR and FMRC models as implemented in {\tt flexmix} will simply be referred to as the FMR and FMRC models, respectively. 

The remainder of the paper is organized as follows. 
In Section~\ref{sec:methodology}, basic ideas on CWMs are summarized. 
In Section~\ref{sec:parsimonious_models}, we recall an eigen-decomposition of a covariance matrix.
Identifiability is treated in Section~\ref{sec:Identifiability}. 
An expectation-maximization (EM) algorithm for maximum likelihood parameter estimation is presented in Section~\ref{sec:Inference}. Moreover, issues of model selection, algorithm initialization, convergence criterion, and performance assessment are also discussed. 
In Section~\ref{sec:numerical_studies}, results of numerical studies based on both real and simulated data are presented. 
Finally, in Section~\ref{sec:discussion}, some conclusions and ideas for future research are discussed.

\section{Cluster-weighted models}
\label{sec:methodology}

Multivariate correlated responses can be conveniently accounted for in a CWM framework. Let $\bX$ and $\bY$ be random vectors defined on $\Omega$ with joint probability distribution $p(\bx,\by)$. Here, the response vector $\bY$ has values in $\mathbb{R}^d$ and the vector of covariates $\bX$ has values in $\mathbb{R}^p$. Let $\Omega$ be partitioned into $G$ disjoint groups, such that $\Omega= \Omega_1 \cup \cdots \cup \Omega_G$. Then, in a CWM framework, the joint probability $p(\bx,\by)$ can be decomposed as
\begin{equation}
p(\bx,\by)=\sum^{G}_{g=1} p(\by|\bx,\Omega_g) p(\bx|\Omega_g) \pi_g, 
\label{eq:MCWM}
\end{equation}
where $p(\by|\bx,\Omega_g)$ is the conditional density of the multivariate response $\bY$ given the covariates $\bX$ and $\Omega_g$, $p(\bX|\Omega_g)$ is the probability density of $\bX$ given $\Omega_g$, and $\pi_g=p(\Omega_g)$ are the mixing weights, where $\pi_{g}>0$ and $\sum^G_{g=1}\pi_{g}=1$, $g = 1,\ldots,G$. 
Here, $\bX|\Omega_g$ is assumed to be normally distributed with mean $\bmu_{\bX g}$ and covariance matrix $\bSigma_{\bX g}$, and $\bY|(\bX=\bx,\Omega_g)$ is assumed to be normally distributed with conditional mean $\bmu_{\bY}\left(\bx|\bbeta_g\right)$, given by some linear transformation of $\bX$, and covariance matrix $\bSigma_{\bY g}$, $g = 1,\ldots,G$.
Here, $\bmu_{\bY}\left(\bx|\bbeta_g\right) = \bbeta'_g \bx^*$ is used where $\bbeta_g \in \mR^{(1+p) \times d}$ and $\bx^*=\left(1,\bx'\right)'$. Hence, $\bY$ ($\bX$) is a matrix of $N$ observations on $d$ ($p+1$) response variables while $\bbeta$ for a specific component $g$ is a $p+1$ by $d$ matrix of regression coefficients with one column for each response variable.
Then, model~\eqref{eq:MCWM} can be rewritten as
\begin{equation}
p\left(\bx,\by|\bvartheta\right)=\sum^{G}_{g=1} 
\phi_d\left(\by|\bx,\bmu_{\bY}(\bx|\bbeta_g),\bSigma_{\bY g}\right) 
\phi_p\left(\bx|\bmu_{\bX g},\bSigma_{\bX g}\right)  
\pi_g, 
\label{mGCWM}
\end{equation}
where $\phi_d$ ($\phi_p$) represents the density of a $d$-variate ($p$-variate) Gaussian random vector and $\bvartheta$ denotes the set of all parameters.

\section{Parsimonious Models}
\label{sec:parsimonious_models}

For a single $q\times q$ covariance matrix, the number of free parameters increases quadratically with the dimensionality $q$. In model-based clustering, parsimony is usually necessary for real applications. Parsimony can be introduced by constraining parts of a particular decomposition of a covariance matrix \citep{Cele:Gova:Gaus:1995,mcnicholas2010,Punz:Brow:McNi:Hypo:2015}. An eigen-decomposition of such a matrix \citep[cf.][]{Cele:Gova:Gaus:1995} yields
\begin{equation*}
\bSigma_g=\lambda_g\boldsymbol{\Gamma}_g\boldsymbol{\Delta}_g\boldsymbol{\Gamma}_g',
\end{equation*}
for $g=1,\ldots,G$, where $\lambda_g=\left|\boldsymbol{\Sigma}_g\right|^{1/q}$ is a constant, $\boldsymbol{\Delta}_g$
is a diagonal matrix with entries (sorted in decreasing order) proportional to the eigenvalues of $\bSigma_g$ with the constraint $|\boldsymbol{\Delta}_g|=1$, and $\boldsymbol{\Gamma}_g$ is a $q\times q$ orthogonal matrix of the eigenvectors (ordered according to the eigenvalues) of~$\bSigma_{g}$.
Geometrically, $\lambda_g$ determines the volume, $\boldsymbol{\Delta}_g$ the shape, and $\boldsymbol{\Gamma}_g$ the orientation of the $g$th component. By constraining $\lambda_g$, $\boldsymbol{\Gamma}_g$, and $\boldsymbol{\Delta}_g$ to be equal or variable across groups, a family of fourteen models (\tablename~\ref{tab:models}) is obtained. This family can be further split into three subfamilies. Here, the EII and VII models belong to the \textit{spherical} family, the models with an axis-aligned orientation belong to the \textit{diagonal} family, while the rest of the models belong to the \textit{general} family.
\begin{table}[htb]
\caption{Geometric interpretation and the number of free parameters in the eigen-decomposed covariance structures.} \label{tab:models}
\begin{tabular*}{1.0\textwidth}{@{\extracolsep{\fill}}lllllr}
\hline
Model & Volume & Shape & Orientation & $\boldsymbol{\Sigma}_g$ & Free Cov.\ Parameters \\
\hline
EII & Equal    & Spherical & -            & $\lambda \boldsymbol{I}$           & 1\\
VII & Variable & Spherical & -            & $\lambda_g \boldsymbol{I}$         & $G$\\[2mm]
EEI & Equal    & Equal     & Axis-Aligned & $\lambda \boldsymbol{\Delta}$      & $q$\\
VEI & Variable & Equal     & Axis-Aligned & $\lambda_g \boldsymbol{\Delta}$    & $G+q-1$\\
EVI & Equal    & Variable  & Axis-Aligned & $\lambda \boldsymbol{\Delta}_g$    & $Gq - (G - 1)$\\
VVI & Variable & Variable  & Axis-Aligned & $\lambda_g \boldsymbol{\Delta}_g$  & $Gq$\\[2mm]
EEE & Equal    & Equal     & Equal        & $\lambda\boldsymbol{\Gamma}\boldsymbol{\Delta}\boldsymbol{\Gamma}'$  & $q\left(q+1\right)/2$\\
VEE & Variable & Equal     & Equal        & $\lambda_g\boldsymbol{\Gamma}\boldsymbol{\Delta}\boldsymbol{\Gamma}'$  & $q(q+1)/2 + (G-1)$ \\
EVE & Equal    & Variable  & Equal        & $\lambda\boldsymbol{\Gamma}\boldsymbol{\Delta}_g\boldsymbol{\Gamma}'$  & $q(q+1)/2 + (G-1)(q-1)$ \\
EEV & Equal    & Equal     & Variable     & $\lambda\boldsymbol{\Gamma}_g\boldsymbol{\Delta}\boldsymbol{\Gamma}_g'$  & $Gq(q+1)/2 - (G-1)q$ \\
VVE & Variable & Variable  & Equal        & $\lambda_g\boldsymbol{\Gamma}\boldsymbol{\Delta}_g\boldsymbol{\Gamma}'$  & $q(q+1)/2 + (G-1)q$ \\
VEV & Variable & Equal     & Variable   & $\lambda_g\boldsymbol{\Gamma}_g\boldsymbol{\Delta}\boldsymbol{\Gamma}_g'$  & $Gq(q+1)/2 - (G-1)(q-1)$ \\
EVV & Equal    & Variable  & Variable   & $\lambda\boldsymbol{\Gamma}_g\boldsymbol{\Delta}_g\boldsymbol{\Gamma}_g'$  & $Gq(q+1)/2 - (G-1)$ \\
VVV & Variable & Variable  & Variable   & $\lambda_g \boldsymbol{\Gamma}_g \boldsymbol{\Delta}_g \boldsymbol{\Gamma}_g'$  & $Gq\left(q+1\right)/2$ \\
\hline
\end{tabular*}
\end{table}

Here, the covariance matrices $\bSigma_{\bX g}$ and $\bSigma_{\bY g}$ in \eqref{mGCWM} are decomposed. Constraining $\lambda_g$, $\boldsymbol{\Gamma}_g$, and $\boldsymbol{\Delta}_g$ on these decompositions in Equation~\eqref{mGCWM} leads to 14 different covariance structures for both $\bX$ and $\bY$, resulting in a total of $14\times 14=196$ models. This is the eMCWM family.  
Note that for the purposes of notation, an eMCWM with a VEV covariance structure for $\bY|\bX=\bx$ and an EII covariance structure for $\bX$ will be denoted as a VEV-EII model.

\section{Identifiability}
\label{sec:Identifiability}

Identifiability is important for parameter inference and for the usual asymptotic theory to hold for maximum likelihood estimation of the model parameters (cf.\ Section~\ref{sec:Inference}).
Proof of the identifiability of univariate and multivariate finite Gaussian mixture distributions is provided by \citet{Teic:Iden:1963} and \citet{Yako:Spra:Onth:1968}, respectively, while general conditions for identifiability of mixtures of linear models can be found in \citet{Henn:Iden:2000}. 
Proof of the identifiability of the generalized linear Gaussian CWM has recently been provided by \citet{Ingr:Punz:Vitt:TheG:2014}. 
Here, identification conditions are provided for the multivariate response (Gaussian) CWM defined in \eqref{mGCWM}.

Generally speaking, identifiability for mixture models can be defined as follows. 
Consider a parametric class of density (probability) functions $\cF = \left\{ f(\bz| \bpsi) \, : \, \bz \in \cZ \, , \, \bpsi \in \bPsi \right\}$. Then, the class of finite mixtures of functions in $\cF$ is
\begin{align*} 
\cH & = \biggl\{ h(\bz| \bvartheta) \, : \, h(\bz|\bvartheta) = \sum_{g=1}^G f(\bz| \bpsi_g) \pi_g, \, \mbox{ with } \,  \pi_g >0 \, \mbox{ and } \, \sum_{g=1}^G 
\pi_g =1,  \nonumber \\
& \qquad   f(\cdot| \bpsi_g) \in \cF, g=1, \ldots, G, \bpsi_g \neq \bpsi_j \mbox{ for } g\neq j,  \, G \in \mN, \, \bz \in \cZ\, , \, \bvartheta\in\bTheta  \biggr\}.  
\end{align*} 
This class is identifiable if for any two members 
\[ 
h(\bz|\bvartheta)=\sum_{g=1}^{G} f(\bz| \bpsi_g) \pi_g \quad \mbox{and} \quad 
h(\bz|\widetilde{\bvartheta}) = \sum_{s=1}^{\widetilde{G}} f(\bz| \widetilde{\bpsi}_s) \widetilde{\pi}_s 
\]
of $\cH$, $h(\bz| \bvartheta) = h(\bz| \widetilde{\bvartheta})$ implies that $G=\widetilde{G}$, and for each $g \in \{1, \ldots, G\}$ there exists $s\in\{1,\ldots,G\}$ such that $\pi_g=\widetilde{\pi}_s$ and $\bpsi_g=\widetilde{\bpsi}_s$. 

In Theorem~\ref{thm:identif}, a sufficient identification condition is provided for the most general eMCWM (i.e., the VVV-VVV model).
In particular, a sufficient condition for the identifiability of the class $\cC$ is established where
\begin{align} 
\cC & = \biggl\{ p(\bx, \by| 	\bvartheta) \, : \, p(\bx, \by| \bvartheta) = \sum^{G}_{g=1} 
\phi_d\left(\by|\bx, \bmu_{\bY}\left(\bx|\bbeta_g\right),\bSigma_{\bY g}\right) \phi_p(\bx|\bmu_{\bX g},\bSigma_{\bX g}) \pi_g, 
\biggr. \nonumber \\& 
\qquad \biggl.  
\text{ with }  \pi_g >0, \sum_{g=1}^G \pi_g =1, \, (\bbeta_g,\bSigma_{\bY g})  \neq (\bbeta_j,\bSigma_{\bY j})  \mbox{ for } g \neq j, 
\, \left(\bx, \by\right) \in \mR^{p+d}\, , \, 
 \biggr. \nonumber \\& \qquad \biggl.  
 \bvartheta = \{\bbeta_g, \bSigma_{\bY g}, \bmu_{\bX g}, \bSigma_{\bX g}, \pi_g; g=1, \ldots, G\} \in\bTheta,  G \in \mN \, \biggr\}.  
\label{eq:class_gencwm}
\end{align} 
In the following theorem, a sufficient condition for  $\cC$ to be identifiable in $\cX \times \mR^d$ is provided, where  $\cX \subseteq \mR^p$ is a set having probability one according to the multivariate Gaussian density $\phi_p$.

In other words, this proves that the class $\cC$ is identifiable for almost all $\bx \in \mR^p$ and for all $\by \in \mR^d$.
\begin{Thm}\label{thm:identif}{\rm Let $\cC$ be the class  defined in \eqref{eq:class_gencwm} and assume that there exists a set $\cX \subseteq \mR^p$ having probability equal to one according to the $p$-variate Gaussian distribution such that  the mixture of regression models
\begin{equation}
\sum_{g=1}^G \phi_d\left(\by|\bx, \bmu_{\bY}\left(\bx|\bbeta_g\right),\bSigma_{\bY g}\right) \alpha_g\left(\bx\right),  \quad \by \in \mR^d,  
\label{eq:mixtglm_alpha}
\end{equation}
is identifiable for  each fixed  $\bx \in \cX$, where $\alpha_1(\bx), \ldots, \alpha_G(\bx)$ are positive weights summing to one for each $\bx \in \cX$. 
Then, the class $\cC$ is identifiable in $\cX \times \mR^d$.
}\end{Thm}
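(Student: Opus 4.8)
The plan is to reduce the joint identifiability of $\cC$ to two ingredients: the elementary fact that a positive multiple of a Gaussian density in $\bx$ determines its scale, mean and covariance, together with the assumed identifiability of the conditional mixture of regression models \eqref{eq:mixtglm_alpha}. Suppose two members of $\cC$ coincide on $\cX\times\mR^d$,
\begin{equation*}
\sum_{g=1}^{G}\phi_d\!\left(\by|\bx,\bmu_{\bY}(\bx|\bbeta_g),\bSigma_{\bY g}\right)\phi_p(\bx|\bmu_{\bX g},\bSigma_{\bX g})\,\pi_g
=\sum_{s=1}^{\widetilde{G}}\phi_d\!\left(\by|\bx,\bmu_{\bY}(\bx|\widetilde{\bbeta}_s),\widetilde{\bSigma}_{\bY s}\right)\phi_p(\bx|\widetilde{\bmu}_{\bX s},\widetilde{\bSigma}_{\bX s})\,\widetilde{\pi}_s .
\end{equation*}
First I would integrate both sides over $\by\in\mR^d$; since each $\phi_d$ integrates to one in $\by$, this gives $\sum_g\pi_g\phi_p(\bx|\bmu_{\bX g},\bSigma_{\bX g})=\sum_s\widetilde{\pi}_s\phi_p(\bx|\widetilde{\bmu}_{\bX s},\widetilde{\bSigma}_{\bX s})$ for every $\bx\in\cX$; denote this common (strictly positive) marginal value by $p_{\bX}(\bx)$.

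Next, for each fixed $\bx\in\cX$ I would divide the displayed identity by $p_{\bX}(\bx)$. With $\alpha_g(\bx)=\pi_g\phi_p(\bx|\bmu_{\bX g},\bSigma_{\bX g})/p_{\bX}(\bx)$ and $\widetilde{\alpha}_s(\bx)=\widetilde{\pi}_s\phi_p(\bx|\widetilde{\bmu}_{\bX s},\widetilde{\bSigma}_{\bX s})/p_{\bX}(\bx)$, both weight vectors are positive and sum to one, and the two sides become instances of the conditional mixture \eqref{eq:mixtglm_alpha}. The hypothesis then applies: for each $\bx\in\cX$ we obtain $G=\widetilde{G}$ and a permutation $\sigma_{\bx}$ of $\{1,\dots,G\}$ with $\bmu_{\bY}(\bx|\bbeta_g)=\bmu_{\bY}(\bx|\widetilde{\bbeta}_{\sigma_{\bx}(g)})$, $\bSigma_{\bY g}=\widetilde{\bSigma}_{\bY\sigma_{\bx}(g)}$, and $\alpha_g(\bx)=\widetilde{\alpha}_{\sigma_{\bx}(g)}(\bx)$, the last being equivalent to $\pi_g\phi_p(\bx|\bmu_{\bX g},\bSigma_{\bX g})=\widetilde{\pi}_{\sigma_{\bx}(g)}\phi_p(\bx|\widetilde{\bmu}_{\bX\sigma_{\bx}(g)},\widetilde{\bSigma}_{\bX\sigma_{\bx}(g)})$. (Restricting $\cX$ to the full-measure subset on which the $G$ conditional components are genuinely distinct removes only a finite union of hyperplanes and lets the hypothesis be invoked as a $G$-component mixture.)

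The \emph{main obstacle} is that the matching permutation $\sigma_{\bx}$ is a priori a function of $\bx$, whereas identifiability of $\cC$ demands a single permutation matching the global parameters. I would resolve this by a counting argument: $\cX$ has full Lebesgue measure (its complement has Gaussian probability zero, hence Lebesgue measure zero), while there are only $G!$ permutations, so some fixed $\sigma$ serves as $\sigma_{\bx}$ on a subset $\cX_\sigma\subseteq\cX$ of positive measure. On $\cX_\sigma$ the three relations hold simultaneously for this single $\sigma$ and every $g$. From $\bmu_{\bY}(\bx|\bbeta_g)=\bmu_{\bY}(\bx|\widetilde{\bbeta}_{\sigma(g)})$ we get $(\bbeta_g-\widetilde{\bbeta}_{\sigma(g)})'\bx^*=\bzero$ on a set of positive measure, and since a nonzero affine functional vanishes only on a hyperplane this forces (componentwise) $\bbeta_g=\widetilde{\bbeta}_{\sigma(g)}$; the relation $\bSigma_{\bY g}=\widetilde{\bSigma}_{\bY\sigma(g)}$ is already free of $\bx$. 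Finally, the scaled Gaussians $\pi_g\phi_p(\bx|\bmu_{\bX g},\bSigma_{\bX g})$ and $\widetilde{\pi}_{\sigma(g)}\phi_p(\bx|\widetilde{\bmu}_{\bX\sigma(g)},\widetilde{\bSigma}_{\bX\sigma(g)})$ agree on $\cX_\sigma$; being real-analytic and agreeing on a set of positive measure, they agree on all of $\mR^p$, and equating two scaled Gaussian densities yields $\pi_g=\widetilde{\pi}_{\sigma(g)}$, $\bmu_{\bX g}=\widetilde{\bmu}_{\bX\sigma(g)}$, and $\bSigma_{\bX g}=\widetilde{\bSigma}_{\bX\sigma(g)}$. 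As these hold for all $g$ under the one permutation $\sigma$, every parameter matches up to relabelling, so $\cC$ is identifiable in $\cX\times\mR^d$.
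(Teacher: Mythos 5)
Your proposal is correct and follows the same overall skeleton as the paper's proof: integrate out $\by$ to equate the marginal Gaussian mixtures in $\bx$, divide by that common marginal to exhibit, for each fixed $\bx$, a conditional mixture of regressions with weights $\alpha_g(\bx)=\pi_g\phi_p(\bx|\bmu_{\bX g},\bSigma_{\bX g})/p_{\bX}(\bx)$, invoke the hypothesis pointwise in $\bx$, and then recover $\pi_g$, $\bmu_{\bX g}$, $\bSigma_{\bX g}$. The genuine difference lies in how the pointwise-in-$\bx$ conclusions are converted into global parameter identities. The paper builds the difficulty away up front: it defines $\cX$ so that its complement is a finite union of hyperplanes on which distinct coefficient matrices could produce coinciding conditional means, whence the matching of components at a single $\bx\in\cX$ already forces $\bbeta_g=\widetilde{\bbeta}_s$ and $\bSigma_{\bY g}=\widetilde{\bSigma}_{\bY s}$; constancy of the pairing $(g,s)$ across $\bx$ then follows (implicitly) from the pairwise distinctness of the $(\bbeta_g,\bSigma_{\bY g})$ required in \eqref{eq:class_gencwm}. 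You instead confront the $\bx$-dependence of the matching head-on, via pigeonhole over the $G!$ permutations and two positive-measure facts (a nonzero affine functional vanishes only on a hyperplane; real-analytic functions agreeing on a set of positive measure agree everywhere). This is arguably more explicit about a point the paper glosses over, at the cost of a small measurability issue for $\cX_\sigma$ (harmless, since the relevant agreement sets are closed, so positive outer measure suffices). Finally, for the mixing weights the paper integrates $p(\Omega_g|\bx,\bmu_{\bX},\bSigma_{\bX},\bpi)$ against the common marginal over $\cX$ to obtain $\pi_g=\widetilde{\pi}_s$ directly, whereas you extend the equality of the scaled Gaussians $\pi_g\phi_p(\bx|\bmu_{\bX g},\bSigma_{\bX g})=\widetilde{\pi}_{\sigma(g)}\phi_p(\bx|\widetilde{\bmu}_{\bX\sigma(g)},\widetilde{\bSigma}_{\bX\sigma(g)})$ to all of $\mR^p$ by analyticity and then integrate; both routes reach the same conclusion.
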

\begin{proof}
The proof is given in \appendixname~\ref{subsec:identifiability}.
\end{proof}

\section{Inference}
\label{sec:Inference}

\subsection{Parameter Estimation for eMCWM}
\label{sec:parameter_estimation}

Parameter estimation, via the EM algorithm of \citet{Demp:Lair:Rubi:Maxi:1977}, is described here for the unconstrained VVV-VVV model from the eMCWM family. Details on alternative algorithms to find maximum likelihood estimates of a mixture distribution can be found in \citet{bohning1995, bohning2003}.
Let $\mathcal{S}=\left\{(\bx_1,\by_1),\ldots,(\bx_N,\by_N)\right\}$ be a sample of $N$ independent observations from model~\eqref{mGCWM}. 
Then, the incomplete-data likelihood function is
\begin{equation*}
L\left(\bvartheta|\mathcal{S}\right) 
= 
\prod_{i=1}^N p(\bx_i,\by_i|\bvartheta)  
= 
\prod_{i=1}^N \left[\sum_{g=1}^G 
\phi_d(\by_i|\bmu_{\bY}(\bx_i|\bbeta_g),\bSigma_{\bY g}) 
\phi_p(\bx_i|\bmu_{\bX g},\bSigma_{\bX g})  
\pi_g 
\right].
\label{likCWMg} 
\end{equation*}
Note that $\mathcal{S}$ is considered incomplete in the context of the EM algorithm. The complete-data are $\mathcal{S}_{\text{c}} = \left\{(\bx_1, \by_1, \bz_1),\ldots,(\bx_N, \by_N, \bz_N)\right\}$, where the missing variable $\bz_{i}$ is the component label vector such that $z_{ig}$ = 1 if $(\bx_i,\by_i)$ belongs to the $g$th component and $z_{ig} = 0$ otherwise. 
Now, the corresponding complete-data likelihood is
\begin{equation*}
L_c\left(\bvartheta|\mathcal{S}_{\text{c}}\right) = 
\prod_{i=1}^N \prod_{g=1}^G [
\phi_d(\by_{i}|\bmu_{\bY}(\bx_i|\bbeta_g),\bSigma_{\bY g})
\phi_p(\bx_i|\bmu_{\bX g},\bSigma_{\bX g})  \pi_g]^{z_{ig}}.
\end{equation*}
The complete-data log-likelihood function can be decomposed as
\begin{displaymath}
l_c\left(\bvartheta|\mathcal{S}_{\text{c}}\right) = \sum_{i=1}^N \sum_{g=1}^G 
z_{ig} 
\left[ 
\log \phi_d(\by_{i}|\bmu_{\bY}(\bx_i|\bbeta_g),\bSigma_{\bY g}) + 
\log \phi_p(\bx_i|\bmu_{\bX g},\bSigma_{\bX g}) + 
\log \pi_g\right] 
.
\end{displaymath}
The E-step involves calculating the expected complete-data log-likelihood
\begin{align*}
Q\left(\bvartheta|\bvartheta^{(k)}\right) 
& = \EE_{\bvartheta^{(k)}}\left\{l_c\left(\bvartheta|\mathcal{S_{\text{c}}}\right)\right\} \\
& = \sum_{i=1}^N \sum_{g=1}^G  \hat{z}_{ig}^{(k)} \left[ Q_{1}\left(\bbeta_g,\bSigma_{\bY g}|\bvartheta^{(k)}\right) + Q_{2}\left(\bmu_{\bX g},\bSigma_{\bX g}|\bvartheta^{(k)}\right) + \log \pi_g^{(k)}\right],
\end{align*}
where
\begin{equation*} 
\hat{z}_{ig}^{(k)} =  \EE_{\bvartheta^{(k)}} \{Z_{ig}|\bx_i, \by_i\} = 
\frac{
\phi_d(\by_i|\bmu_{\bY}(\bx_i|\bbeta_g^{(k)}),\bSigma_{\bY g}^{(k)}) 
\phi_p(\bx_i|\bmu_{\bX g}^{(k)},\bSigma_{\bX g}^{(k)})
\pi_g^{(k)} 
} 
{
\displaystyle\sum_{j=1}^G 
\phi_d(\by_i|\bmu_{\bY}(\bx_i|\bbeta_j^{(k)}),\bSigma_{\bY j}^{(k)}) 
\phi_p(\bx_i|\bmu_{\bX j}^{(k)},\bSigma_{\bX j}^{(k)})
\pi_j^{(k)}
} 
\end{equation*}
provides the current value of $z_{ig}$ on the $k^{\text{th}}$-iteration and
\begin{align*} 
Q_{1}\left(\bbeta_g,\bSigma_{\bY g}|\bvartheta^{(k)}\right) & =  \frac{1}{2} \left[- d \log \left(2\pi\right) - \log |\bSigma_{\bY g}^{(k)}| - \left(\by_i-\bbeta^{'(k)}_g \bx_i^*\right)' \bSigma_{\bY g}^{(k)(-1)} (\by_i-\bbeta^{'(k)}_g \bx_i^*)\right], \\
Q_{2}\left(\bmu_{\bX g},\bSigma_{\bX g}|\bvartheta^{(k)}\right) & =  \frac{1}{2} \left[- p \log \left(2\pi\right) - \log |\bSigma_{\bX g}^{(k)}| - \left(\bx_i-\bmu_{\bX g}^{(k)}\right)' \bSigma_{\bX g}^{(k)(-1)} \left(\bx_i-\bmu_{\bX g}^{(k)}\right)\right].
\end{align*}
The M-step on the $(k+1)^\text{th}$ iteration of the EM algorithm involves the maximization of the conditional expectation of the complete-data log-likelihood with respect to $\bvartheta$.
The update for $\pi_{g}^{(k+1)}$ is
\begin{equation}
\hat{\pi}_{g}^{(k+1)} = \frac{1}{N} \sum_{i=1}^N \hat{z}_{ig}^{(k)}. \label{pi_g} 
\end{equation}
The updates for $\bmu_{\bX g}^{(k+1)}$ and $\bSigma_{\bX g}^{(k+1)}$, $g=1, \ldots, G$, are
\begin{align}
\hat{\bmu}_{\bX g}^{(k+1)} &= \frac{\sum_{i=1}^N \hat{z}_{ig}^{(k)}  \bx_i}{\sum_{i=1}^N \hat{z}_{ig}^{(k)} }, \label{bmu_g} \\
\hat{\bSigma}_{\bX g}^{(k+1)} &= \frac{\sum_{i=1}^N \hat{z}_{ig}^{(k)}  \left(\bx_i - \hat{\bmu}_{\bX g}^{(k+1)}\right)\left(\bx_i - \hat{\bmu}_{\bX g}^{(k+1)}\right)'}{\sum_{i=1}^N \hat{z}_{ig}^{(k)} }.  
\label{bSigma_xg}
\end{align}
These closed form updates can also be found in \citet{mclachlan2000}.
The updates for $\bbeta_{g}^{(k+1)}$ and $\bSigma_{\bY g}^{(k+1)}$ (see \appendixname~\ref{mstepderivation} for details), $g=1, \ldots, G$, are
\begin{equation}
\hat{\bbeta}_{g}^{(k+1)'} = \left( \sum_{i=1}^N \hat{z}_{ig}^{(k)}  \by_i \bx_i^{*'}  \right)  
\left({\sum_{i=1}^N \hat{z}_{ig}^{(k)}  \bx_i^* \bx_i^{*'}}\right)^{-1}
\label{bB_g}
\end{equation}
when ${\sum_{i=1}^N \hat{z}_{ig}^{(k)}  \bx_i^* \bx_i^{*'}}$ is non-singular, and
\begin{equation}
\hat{\bSigma}_{\bY g}^{(k+1)} = \frac{\sum_{i=1}^N \hat{z}_{ig}^{(k)}  \left(\by_i-\hat{\bbeta}^{(k+1)'}_g \bx_i^*\right)\left(\by_i-\hat{\bbeta}^{(k+1)'}_g \bx_i^*\right)'}{\sum_{i=1}^N \hat{z}_{ig}^{(k)} },
\label{bSigma_yg}
\end{equation}
respectively. 
Equations \eqref{pi_g} through \eqref{bSigma_yg} are the parameter updates for the VVV-VVV model of the eMCWM family. 
For the other models of this family, the M-step updates vary only with respect to the component covariance matrices $\bSigma_{\bX g}$ and $\bSigma_{\bY g}$; these updates are similar to those of the GPCM family of \citet{Cele:Gova:Gaus:1995}.

\subsection{Model Selection}
\label{sec:Model Selection}

For choosing the ``best'' fitted model among a family of models, a likelihood-based model selection criterion is conventionally used, and the BIC is the most popular for Gaussian mixture models. Even though mixture models generally do not satisfy the regularity conditions for the asymptotic approximation used in the development of the BIC \citep{keribin1998, keribin2000}, it has performed well in practice and has been used extensively since the work of \cite{dasgupta1998} and \cite{fraley2002}. 
The BIC can be calculated as 
$$
\text{BIC}=2l(\hat{\bvartheta})-m\log{N},
$$
where $l(\hat{\bvartheta})$ is the incomplete-data log-likelihood at the maximum likelihood estimates and $m$ is the number of free parameters.
The ICL is another commonly used information criterion that additionally makes use of the estimated mean entropy, i.e., it takes into account the uncertainty of the classification of an observation to a component and can be computed as 
$$
\text{ICL}\approx\mbox{BIC}+\sum_{i=1}^{N}\sum_{g=1}^{G}\mbox{MAP}(\hat{z}_{ig})\log{\hat{z}_{ig}}.
$$
Here, $\mbox{MAP}(\hat{z}_{ig})$ is the maximum \emph{a posteriori} probability and equals 1 if $\mbox{max}_{h}(\hat{z}_{ih})$, $h=1,\ldots,G$, occurs at component $h=g$, and 0 otherwise. 

\subsection{Initialization}
\label{sec:initialization}

The EM algorithm is noted to be heavily dependent on starting values \citep{baudry2015}. Singularities and convergence to local maxima are well documented, and Gaussian mixture models are known to have unbounded likelihood surfaces \citep{titterington1985}. 
Constraining eigenvalues can alleviate some of these issues \citep{ingrassia2007, browne2013constrained}, as can employing deterministic annealing \citep{zhou2010}.
Initializing the EM algorithm multiple times using $k$-means \citep{macqueen1967, hartigan1979} or random initializations and choosing the initial values of $z_{ig}$ from the run picked using the highest log-likelihood value can also help. 

Here, the EM algorithm is initialized as follows. 
The EEE-EEE model is run 10 times for each $G$: 9 times using a random initialization for the $z_{ig}$, and once with a $k$-means initialization. Note that, for the $k$-means initialization, the initial $\tau_{ig}$ are selected from the best $k$-means clustering results from ten random starting values for the $k$-means algorithm as implemented in {\sf R}. From these models, the model with the highest log-likelihood value is chosen; then, the associated $\mbox{MAP}(\hat{z}_{ig})$ is used to initialize the families of models.  
In our simulations, this procedure performed well.

\subsection{Convergence Criterion}

A common criterion to stop the EM algorithm is when the difference between the log-likelihood values on consecutive iterations is less than some $\epsilon$. Here, the Aitken stopping criterion \citep{Aitk:OnBe:1926} is used to determine convergence. Use of the aforementioned lack of progress criterion can result in the EM converging earlier than with the Aitken stopping criterion, resulting in estimates that might not be as close to the maximum likelihood estimates \citep{mcnicholas2010}. At each iteration of the EM algorithm, the Aitken acceleration procedure is used to compute an estimated asymptotic value. Based on this, a decision can be made regarding whether or not the algorithm has reached convergence, i.e., whether or not the log-likelihood is sufficiently close to its estimated asymptotic value. See \citet{bohning1994}, \citet{lindsay1995}, and \citet{mcnicholas2010} for details.

\subsection{Performance Assessment}

The adjusted Rand index \citep[ARI;][]{hubert1985} can be used to judge performance of a model relative to the true classification (when known). The predicted group memberships at the maximum likelihood estimates of the model parameters are given by 
$\mbox{MAP}(\hat{z}_{ig})$.
The Rand index (RI) can be used to compare these partitions \citep{rand1971}. These predicted classifications from the best fitted model can be cross-tabulated against the true (known) group memberships. Now, defining agreements based on pairwise comparisons as the observations that should be in the same group and are, plus those that should not be in the same group and are not, the Rand index is calculated as the number of agreements over the total number of pairs. The RI takes a value between 0 and 1 (the latter indicative of perfect agreement). However, the RI has a positive expected value under random assignment, which leads to difficulty in interpreting small RI values.

The ARI calculates the agreement between true and predicted classification by correcting the RI to account for chance. Hence, an ARI of 1 corresponds to perfect clustering whereas an ARI of 0 implies that the results are no better than would be obtained by chance. Furthermore, the ARI can be negative and such values are indicative of classification that is worse than would be expected by random assignment. The reader is advised to consult \cite{steinley2004} for discussion and further details about the ARI. 

\section{Experiments and Illustrations}
\label{sec:numerical_studies}

The eMCWM family is implemented in {\sf R}. Simulations are presented to illustrate parameter recovery for a host of different models. Clustering performance of the eMCWM family is compared to that of FMR, FMRC, eFMR, eFMRC, and GPCMs on benchmark data. Although the objective was to develop a mixture model that can incorporate linear dependencies on covariates, the models' performance is also compared to the GPCM family as implemented in the {\tt mixture} package. Note that GPCMs are constructed on different assumptions and are not meant to account for regression structures. However, GPCMs remain the most commonly used models in the literature for model-based clustering and, hence, a comparison to these models is relevant. As mentioned earlier, these models are supported by the \texttt{mclust}, \texttt{mixture}, and \texttt{Rmixmod} packages. In particular, \texttt{mixture} makes use of a majorization-minimization algorithm for the EVE and VVE models  \citep{browne2014mm}, which works better in higher dimensions than the Flury procedure \citep{Flur:Gaut:anal:1986} used in \texttt{Rmixmod}.
Note that,	 for the M-step for the different covariance structures in \tablename~\ref{tab:models}, the \texttt{mixture} package \citep{mixture1.1} is used. 
The eMCWM family is also compared to the eFMR and eFMRC families of \citet{dang2015}. 
The \texttt{flexmix} FMR and FMRC algorithms allow for specification of a user-defined initialization matrix; thus, all
algorithms run on a specific data sample are initialized with the same set of $\mbox{MAP}(\hat{\tau}_{ig})$ values to facilitate comparison of the performance of the algorithms from the same starting values (Section~\ref{sec:initialization}).
The performance of the eMCWM family is illustrated on artificial and real data sets in Sections \ref{sec:simulations} and \ref{sec:realdata}, respectively.

\subsection{Analyses on Simulated Data} \label{sec:simulations}
\subsubsection{Simulation 1} \label{sec:VEEVII}

Here, data of size $N=250$ and with $p=d=2$ are generated from a two-component VEE-VII model (\figurename~\ref{simscat}). 
The data set will be referred to as Simulation 1 hereafter. A two-component mixture is simulated with the sample sizes for each group sampled from a binomial distribution with success probability 0.35.
The responses are generated using a VEE covariance structure and the covariates using a VII covariance structure. 
Covariates are generated from a bivariate Gaussian distribution with mean $\bmu_{\bX1}=(3, 2.5)'$ and $\bmu_{\bX2}=(1.1, -4)'$ for components 1 and 2, respectively. The covariance matrices of the covariates for the two groups are $\bigl(\begin{smallmatrix}
1&0\\ 
0&1
\end{smallmatrix} \bigr)$ and 
$\bigl(\begin{smallmatrix}
0.5&0\\ 
0&0.5
\end{smallmatrix} \bigr)$, 
respectively. Under the VII decomposition, this corresponds to $\lambda_{\bX1}=1$ and $\lambda_{\bX2}=0.5$ for component~1 and component~2, respectively. 
The regression coefficient matrices used for the two groups are $\bigl(\begin{smallmatrix}
2 & -0.5 & -1\\ 
-2 & 1.5 & 2
\end{smallmatrix} \bigr)'$ and 
$\bigl(\begin{smallmatrix}
0 & 2.2 & -1\\ 
1 & 2 & 1.5
\end{smallmatrix} \bigr)'$, 
respectively. Lastly, the error matrices for the two groups using a VEE covariance structure are $\bigl(\begin{smallmatrix}
0.92 & 0.56\\ 
0.56 & 1.40
\end{smallmatrix} \bigr)$ and 
$\bigl(\begin{smallmatrix}
1.725 & 1.050\\ 
1.050 & 2.625
\end{smallmatrix} \bigr)$, 
respectively. 
This corresponds to $\lambda_{\bY 1}=0.8$, $\lambda_{\bY 2}=1.5$, and $\boldsymbol{\Gamma}_{\bY 1}\boldsymbol{\Delta}_{\bY 1}\boldsymbol{\Gamma}_{\bY 1}'=\boldsymbol{\Gamma}_{\bY 2}\boldsymbol{\Delta}_{\bY 2}\boldsymbol{\Gamma}_{\bY 2}'=\bigl(\begin{smallmatrix}
1.15 & 0.70\\ 
0.70 & 1.75
\end{smallmatrix} \bigr)$.

\begin{figure}[htb]
\begin{center}
\resizebox{0.7\textwidth}{!}{
\includegraphics[angle=-90] 
{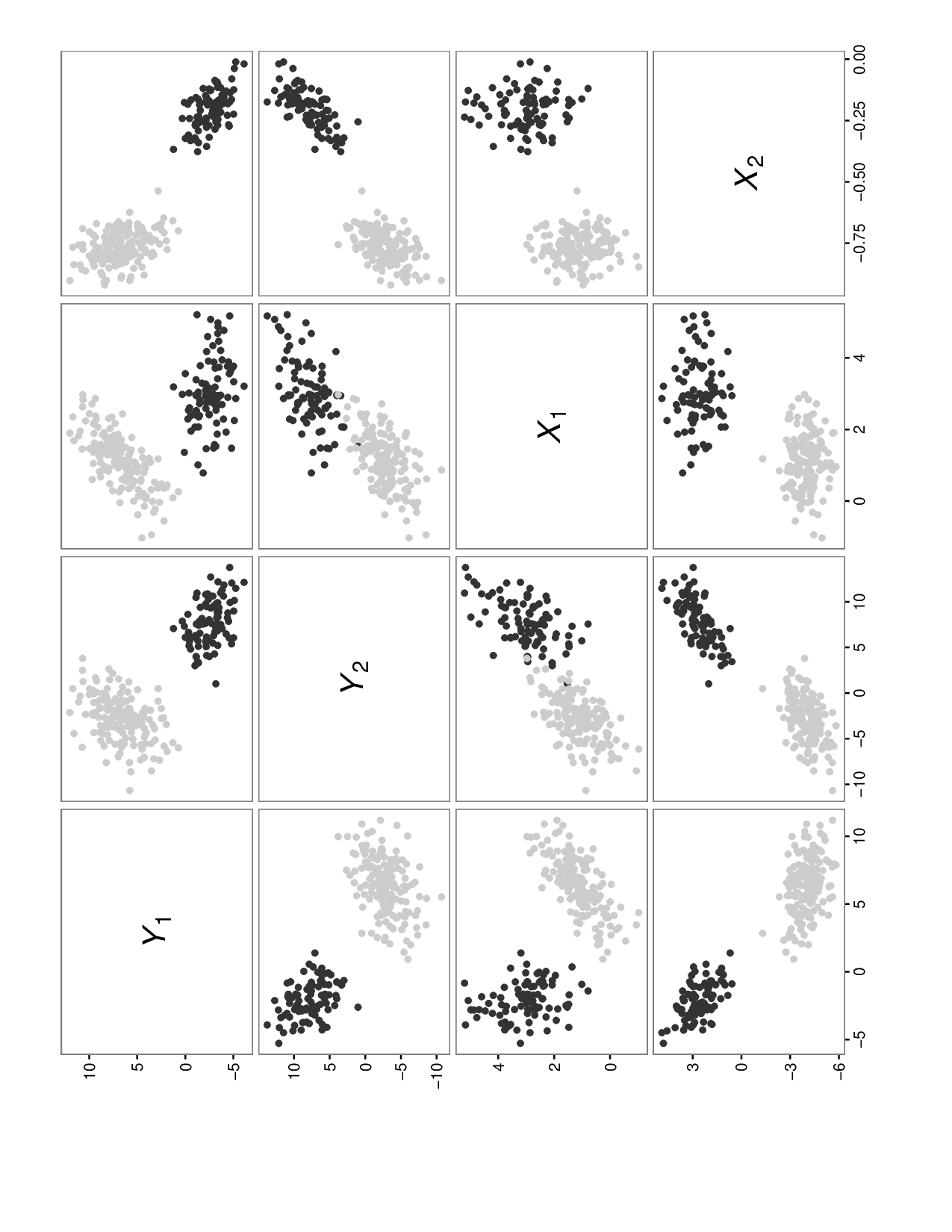}
}
\vspace{-0.15in}
\caption{Scatter plots showing an example of a generated data set for Simulation 1.}
\label{simscat}
\end{center}
\end{figure}

One hundred samples are generated. The eMCWM family is run using all 196 models for $G\in\{1,\ldots,4\}$, resulting in a total of 784 models for each sample. Both the BIC and the ICL chose the same model each time. The VEE-VII model is chosen 91 out of 100 times.  
The selected model fit the correct number of components and resulted in perfect classification 99 out of 100 times (with one misclassification on the remaining sample). 
True and estimated parameters for the two-component VEE-VII model are reported in \tablename~\ref{parestsim1}. Clearly, the parameter estimates are close to the true values.
\begin{table*}[!ht]
\caption{True parameter values along with mean and standard deviations of the parameter estimates (rounded off to two decimals) for the VEE-VII model from the 100 runs for Simulation 1.} \label{parestsim1}
\begin{tabular*}{1.0\textwidth}{@{\extracolsep{\fill}}lrrr}
\hline
Parameter & True values & Mean estimates & Standard deviations\\
  \hline
$\pi_1$ & 0.35 & 0.35 & 0.03\\
$\pi_2$ & 0.65 & 0.65 & 0.03\\ 
$\bmu_{\bX1}$ & $(3, 2.5)'$ & $(2.99, 2.51)'$ & $(0.10, 0.13)'$\\
$\bmu_{\bX2}$ & $(1.1, -4)'$ & $(1.09, -4.00)'$ & $(0.05, 0.05)'$\\
$\lambda_{\bX1}$ & 1 & 0.98 & 0.11\\
  $\lambda_{\bX2}$ & 0.5 & 0.50 & 0.04 \\
$\bbeta'_1$ & $\begin{pmatrix}
2 & -0.5 & -1\\ 
-2 & 1.5 & 2
\end{pmatrix}$ & $\begin{pmatrix}
1.97 & -0.49 & -1.00\\ 
-1.99 & 1.52 & 1.97
\end{pmatrix}$ & $\begin{pmatrix}
0.49 & 0.11 & 0.12\\ 
0.58 & 0.14 & 0.13
\end{pmatrix}$\\
$\bbeta'_2$ &
$\begin{pmatrix}
0 & 2.2 & -1\\ 
1 & 2 & 1.5
\end{pmatrix}$ & $\begin{pmatrix}
-0.07 & 2.18 & -1.02\\ 
1.09 & 1.98 & 1.52
\end{pmatrix}$ & $\begin{pmatrix}
0.68 & 0.15 & 0.15\\ 
0.79 & 0.20 & 0.18
\end{pmatrix}$\\
$\bSigma_{\bY1}$ & $\begin{pmatrix}
0.92 & 0.56\\ 
0.56 & 1.40 \\ 
  \end{pmatrix}$ & $\begin{pmatrix}
0.91 & 0.54\\ 
0.54 & 1.37 \\ 
  \end{pmatrix}$ & $\begin{pmatrix}
0.12 & 0.10\\ 
0.10 & 0.18 \\ 
  \end{pmatrix}$ \\
  $\bSigma_{\bY2}$ & $\begin{pmatrix}
1.73 & 1.05\\ 
1.05 & 2.63 \\  
  \end{pmatrix}$ & $\begin{pmatrix}
1.67 & 1.00\\ 
1.00 & 2.53 \\  
  \end{pmatrix}$ & $\begin{pmatrix}
0.15 & 0.13\\ 
0.13 & 0.26 \\  
  \end{pmatrix}$\\
\hline
\end{tabular*}
\bigskip
\end{table*}

\subsubsection{Simulation 2} \label{sec:VVVVVV}

A two-component mixture is simulated with $450$ observations with two-dimensional Gaussian responses and three-dimensional Gaussian covariates. The VVV covariance structure is used to simulate the data for both the responses and the covariates (\figurename~\ref{simscatvvv}). 
The sample sizes for each group are sampled from a binomial distribution with success probability 0.40. As in Simulation 1, 100 samples are generated. The eMCWM family is again run for $G\in\{1,\ldots,4\}$. Both the BIC and the ICL chose the same model (and two components) each time. True and estimated parameters for the two-component VVV-VVV model are given in \tablename~\ref{parestsim2}. Clearly, the parameter estimates are close on average to the true parameters. 
As compared to Simulation 1, where the generating model was selected by the model selection criteria in a majority of runs, more
parsimonious models are usually picked here as opposed to the VVV-VVV model (which is selected only fifteen times out of one hundred).
The estimated ARI values for the selected models range between 0.94 and 1.00 with a median (mean) value of 0.98 (0.98) over the 100 runs.
\begin{figure}[htb]
\begin{center}
\resizebox{0.7\textwidth}{!}{
\includegraphics[angle=-90] 
{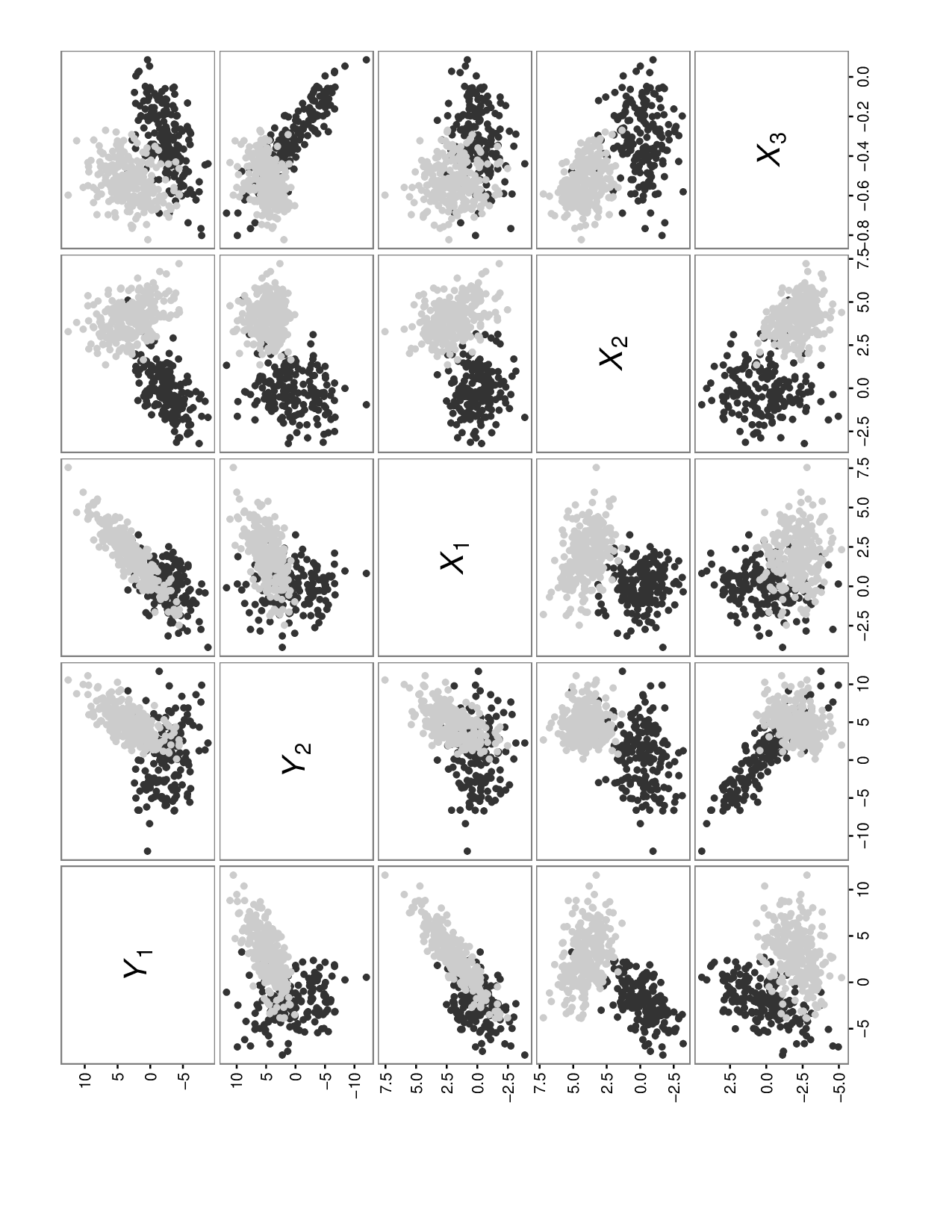}
}
\vspace{-0.15in}
\caption{Scatter plots showing an example of a generated data set for Simulation 2.}
\label{simscatvvv}
\end{center}
\end{figure}

\begin{table*}[!ht]
\caption{True parameter values along with mean and standard deviations of the parameter estimates (rounded off to two decimals) for the VVV-VVV model from the 100 runs for Simulation 2.} \label{parestsim2}
\begin{tabular*}{1.0\textwidth}{@{\extracolsep{\fill}}lrrr}
\hline
Parameter & True values & Mean estimates & Standard deviations\\
  \hline
$\pi_1$ & 0.40 & 0.40 & 0.02\\
$\pi_2$ & 0.60 & 0.60 & 0.02\\ 
$\bmu_{\bX1}$ & $(0, 0, 0)'$ & $(-0.00, -0.01, -0.00)'$ & $(0.09, 0.10, 0.12)'$\\
$\bmu_{\bX2}$ & $(2, 4, -2)'$ & $(2.00, 4.00, -2.00)'$ & $(0.09, 0.05, 0.06)'$\\
$\bSigma_{\bX1}$ & $\begin{pmatrix}
1.72 & -0.18 & 0.27 \\ 
-0.18 & 1.89 & 0.27 \\ 
0.27 & 0.27 & 2.89
  \end{pmatrix}$ & $\begin{pmatrix}
1.70 & -0.15 & 0.27 \\ 
-0.15 & 1.88 & 0.27 \\ 
0.27 & 0.27 & 2.90
\end{pmatrix}$ & $\begin{pmatrix}
0.19 & 0.17 & 0.16 \\ 
0.17 & 0.23 & 0.17 \\ 
0.16 & 0.17 & 0.29
  \end{pmatrix}$\\
  $\bSigma_{\bX2}$ & $\begin{pmatrix}
2.33 & -0.52 & -0.06 \\ 
-0.52 & 0.88 & -0.34 \\ 
-0.06 & -0.34 & 1.04
  \end{pmatrix}$ & $\begin{pmatrix}
2.35 & -0.52 & -0.08 \\ 
-0.52 & 0.88 & -0.34 \\ 
-0.08 & -0.34 & 1.03
  \end{pmatrix}$ & $\begin{pmatrix}
0.21 & 0.10 & 0.11 \\ 
0.10 & 0.07 & 0.06 \\ 
0.11 & 0.06 & 0.09
\end{pmatrix}$ \\
$\bbeta'_1$ & $\begin{pmatrix}
-2 & 0.75 & 1 & 0.5\\ 
1 & 0.5 & 1 & -2
\end{pmatrix}$ & $\begin{pmatrix}
-2.00 & 0.74 & 1.00 & 0.49 \\ 
 0.99 & 0.49 & 1.00 & -2.00 
\end{pmatrix}$ & $\begin{pmatrix}
0.09 & 0.07 & 0.06 & 0.06 \\ 
0.09 & 0.08 & 0.07 & 0.06
\end{pmatrix}$\\
$\bbeta'_2$ &
$\begin{pmatrix}
0.5 & 1.75 & 0.25 & 1 \\ 
 1 & 1 & 1 & 1
\end{pmatrix}$ & $\begin{pmatrix}
0.51 & 1.75 & 0.24 & 1.00 \\ 
1.03 & 1.00 & 1.00 & 1.00
\end{pmatrix}$ & $\begin{pmatrix}
0.25 & 0.03 & 0.06 & 0.05 \\ 
0.38 & 0.05 & 0.09 & 0.08
\end{pmatrix}$\\
$\bSigma_{\bY1}$ & $\begin{pmatrix}
1.34 & 0.47\\ 
0.47 & 1.66 \\ 
  \end{pmatrix}$ & $\begin{pmatrix}
1.32 & 0.47 \\ 
0.47 & 1.63  
  \end{pmatrix}$ & $\begin{pmatrix}
0.13 & 0.11 \\ 
0.11 & 0.17 
  \end{pmatrix}$ \\
  $\bSigma_{\bY2}$ & $\begin{pmatrix}
0.50 & 0.04 \\ 
0.04 & 1.50 
\end{pmatrix}$ & $\begin{pmatrix}
0.50 & 0.03 \\ 
0.03 & 1.47 
  \end{pmatrix}$ & $\begin{pmatrix}
0.04 & 0.06 \\ 
0.06 & 0.14  
  \end{pmatrix}$\\
\hline
\end{tabular*}
\bigskip
\end{table*}

\subsubsection{Simulation 3} \label{sec:highdim}
Parameter recovery is also illustrated on higher dimensional data generated from an EEE-EEE model. One hundred samples of a two-component mixture model are simulated with nine-dimensional covariates and ten-dimensional responses. Group sample sizes are sampled from a binomial distribution with success probability 0.3 and an overall sample size of 1000. Covariates for the first component are simulated from a nine-dimensional Gaussian distribution with zero mean. Covariates for the second component are simulated with mean $(1, 2, -1, -2, 0, 0, 1, 2, -1)'$. The ten by ten matrix of regression coefficients for the two groups is simulated using a standard Gaussian distribution and fixed for all 100 runs (see Appendix \ref{sim3betas} for the coefficient values). 
The common covariance matrix for the covariates is generated using $$\begin{pmatrix}
1.00 & 0.80 & 0.60 \\ 
0.80 & 1.20 & 0.40 \\ 
0.60 & 0.40 & 0.80
 \end{pmatrix} \otimes \boldsymbol{I}_{3},$$
where $\boldsymbol{I}_{3}$ denotes a 3-dimensional identity diagonal matrix. Similarly, the common covariance matrix for the responses is generated using $$\begin{pmatrix}
1.50 & 0.60 \\ 
0.60 & 2.00
 \end{pmatrix} \otimes \boldsymbol{I}_{5},$$
where $\boldsymbol{I}_{5}$ denotes a 5-dimensional identity diagonal matrix.
The recovered parameter estimates for the EEE-EEE model are found to be close on average to the generating parameters.  Due to space constraints and the high dimensionality, the Frobenius norms of the biases of the parameter estimates from the EEE-EEE models are reported in Table 4, similar to \cite{murray2014}. 

Note that while the purpose of this simulation is to investigate parameter estimation in higher dimensions, clustering performance is also evaluated. Two-, three-, and four-component models were selected 46, 29, and 25 times, respectively. Interestingly, selection of the three- and four-component models seem to result from spurious clusters.  Spurious clusters are well known in the mixture modelling literature \citep{ingrassia2004,ingrassia2007} and are commonly associated with low variance of a mixture component, leading to a spike in the log-likelihood \citep{mclachlan2000,ingrassia2004,ingrassia2007}. 
In this simulation, the above is true for every run where a two-component model was not selected in the current simulation by the BIC. 
Such cases have been dealt with by imposing bounds on the eigenvalues of the covariance matrix of interest during parameter estimation \citep{ingrassia2004,browne2013constrained}. This, in effect, imposes a bound on the variance along the principal axes.
Here, a simple post-hoc procedure is implemented. If either of the covariates or responses covariance matrices for a model possess any eigenvalues less than a conservative bound, $\epsilon = 10^{-20}$, that model is removed from the results. This procedure has the effect of disregarding all models with spurious clusters in our simulation. As a result, a two-component model is selected all 100 times; specifically, the EEE-EEE (EEE-VEE) model was selected 99 (1) times. The estimated ARI values for the selected models range between 0.996 and 1.00 with a median (mean) value of 1.00 (1.00).

\begin{table*}[!ht]
\caption{Frobenius norms of the biases of the parameter estimates (rounded off to two decimals) for the EEE-EEE model from the 100 runs for Simulation 3.} \label{mpeparestsim3}
\smallskip
\begin{center}
\begin{tabular}{lr}
\hline
Parameter & $\|\text{Bias}\|$\\
  \hline
$\pi_1$ & 0.00\\
$\pi_2$ & 0.00\\ 
$\bmu_{\bX1}$ & 0.01\\
$\bmu_{\bX2}$ & 0.01\\
$\bSigma_{\bX1} = \bSigma_{\bX2}$ & 0.03\\
$\bbeta'_1$ & 0.13\\
$\bbeta'_2$ & 0.11\\
$\bSigma_{\bY1}=\bSigma_{\bY2}$ &  0.12\\
\hline
\end{tabular}
\end{center}
\bigskip
\end{table*}

\subsection{Analysis of Real Data Sets}
\label{sec:realdata}

\subsubsection{Australian Institute of Sports Data} 

The Australian Institute of Sports (AIS) data \citep{cook1994} contains measurements on 202 athletes (100 female and 102 male) and is available in the {\sf R} package {\texttt{sn}} \citep{sn}. A subset of seven variables that has recently been used in the mixtures of regression literature \citep{soffritti2011} is analyzed here: red cell count ($\mathsf{RCC}$), white cell count ($\mathsf{WCC}$), plasma ferritin concentration ($\mathsf{PFC}$), body mass index, sum of skin folds, body fat percentage, and lean body mass. The blood composition variables ($\mathsf{RCC}$, $\mathsf{WCC}$, and $\mathsf{PFC}$) are selected as the response variables with the biometrical variables being the predictors. All algorithms are run for $G\in\{1,\ldots,4\}$. Table~\ref{aistable} summarizes the results from running the eMCWM, eFMR, eFMRC, FMR, FMRC, 
 and \texttt{mixture} GPCM algorithms. 

\begin{table}[htb]
\caption{Comparison of the performance of the models applied to the AIS data.\label{aistable}}
\begin{tabular*}{\textwidth}{@{\extracolsep{\fill}}lrrrr}
\hline 
Algorithm & Model & $G$ & ARI & Parameters\\
\hline
eMCWM & VVI-VVE & 2 & 0.92 & 59\\
eFMR & VI & 1 & 0 & 18 \\
eFMRC & VI & 1 & 0 & 18\\
FMR &  & 1 & 0 & 18 \\
FMRC &  & 1 & 0 & 18 \\
\texttt{mixture} & EVE & 3 & 0.60 & 63 \\
\hline
\end{tabular*}\\
{\scriptsize*Note that, for a one-component model, the family comprises three covariance structures: VI (diagonal with different entries), EI (diagonal with same entries), and VV (full covariance matrix).}

\end{table}

Both the BIC and the ICL chose a two-component model with the VVI and VVE covariance structures for the response and covariates, respectively. This model yielded an ARI of 0.92. The estimated classification for the chosen eMCWM model is in \tablename~\ref{classais}. Note that the difference between the chosen eMCWM model and the model with the second best BIC value is small ($\approx$ 0.3 BIC points). This latter model (VEI-VVE; 57 degrees of freedom) picked two components and resulted in an ARI of 0.87. The chosen (one-component) model from the eFMR and eFMRC families did not perform well, and neither did the FMR and FMRC models. The \texttt{mixture} software, on the other hand, selected a three-component solution but the estimated classification does not reconcile as well with the known grouping.
\begin{table}[htb]
\caption{Cross-tabulation of true and estimated classifications for two methods applied to the AIS data.\label{classais}}
\begin{tabular*}{1.0\textwidth}{@{\extracolsep{\fill}}lrrrrrr}
            \hline
           & \multicolumn{2}{c}{eMCWM}&& \multicolumn{3}{c}{\texttt{mixture}}\\
          \hline 
		&1&2&&1&2&3\\
		\cline{2-3}\cline{5-7}
                  Female&99&1&&83&14&3\\
                 Male&3&99&& 0&86&16\\
			\hline
\end{tabular*}
\end{table}

\subsubsection{Iris Data} 

The algorithms are also run on the famous Iris data set \citep{anderson1935, fisher1936}. This data set, available in the \texttt{datasets} package as part of {\sf R}, provides measurements on sepal length and width as well as petal length and width for 50 flowers from each of three species of Iris: \emph{setosa}, \emph{versicolor}, and \emph{virginica}. The width measurements are taken to be the response variables with the other variables as the covariates. The algorithms are run for $G\in\{1,\ldots,4\}$ (\tablename~\ref{iristable}). The selected eMCWM model is a three-component model with an ARI of 0.90 (\tablename~\ref{classiris}). Note that the difference between the chosen eMCWM model and the model with the second best BIC value is $\approx$ 1.25 BIC points. This latter model selected a two-component model (VVV-VVV; 29 degrees of freedom) with an ARI of 0.57. This model put datapoints from \emph{versicolor} and \emph{virginica} together in one group.
The model with the second best BIC value is the model that the ICL chose: a two-component VVV model with an ARI of 0.57.
\begin{table}[htb]
\caption{Comparison of the performance of the models applied to the Iris data.\label{iristable}}
\begin{tabular*}{\textwidth}{@{\extracolsep{\fill}}lrrrr}
\hline 
Algorithm & Model & $G$ & ARI & Parameters\\
\hline
eMCWM & VEV-VEV & 3 & 0.90 & 40 \\
eFMR & VEI & 2 & 0.14 & 16 \\
eFMRC & VVI & 2 & 0.45 & 19\\
FMR &  & 2 & 0.19 & 17 \\
FMRC &  & 2 & 0.45 & 19\\
\texttt{mixture} & VEV & 2 & 0.57 & 26\\
\hline
\end{tabular*}
\end{table}

The eFMR and FMR models resulted in two-component models yielding poor ARI values. A two-component VVI model is selected from the eFMRC family with an ARI of 0.45. This model clusters \emph{setosa} and \emph{virginica} perfectly with observations from \emph{versicolor} assigned to the other two clusters. Because the \texttt{flexmix} FMRC algorithm is, in essence, a VVI model, unsurprisingly, it also chose a two-component model with an ARI of 0.45.  The GPCM family as implemented in \texttt{mixture} picked a two-component model (ARI=0.57) with the data points from \emph{versicolor} and \emph{virginica} pooled together in one group (Table~\ref{classiris}).
\begin{table}[htb]
\caption{Cross-tabulation of true and estimated classifications for three methods applied to the Iris data. \label{classiris}}
\begin{tabular*}{1.0\textwidth}{@{\extracolsep{\fill}}lrrrrrrrrr}
            \hline
           & \multicolumn{3}{c}{eMCWM}&& \multicolumn{2}{c}{\texttt{mixture}}&& \multicolumn{2}{c}{eFMRC}\\
           \hline
		&1&2&3&&1&2&&1&2\\
		\cline{2-4}\cline{6-7}\cline{9-10}
                 \emph{setosa}&50& & && 50& && 50&\\
                 \emph{versicolor}& &45&5&& & 50 && 31&19\\
		    \emph{virginica}& & &50&& &50 && &50\\
			\hline
\end{tabular*}
\end{table}

\subsubsection{Crabs Data}

The crabs data set contains five morphological measurements on each of 50 crabs representing both sexes and colours (blue and orange) of the species \emph{Leptograpsus variegatus}. These data were originally introduced in \citet{campbell1974} and are available as part of the {\tt{MASS}} package \citep{venables2002} for {\sf R}. The data are famous for having highly correlated measurements on width of frontal region just anterior to frontal tebercles ($\mathsf{FL}$), width of posterior region ($\mathsf{RW}$), carapace length ($\mathsf{CL}$), carapace width ($\mathsf{CW}$), and body depth ($\mathsf{BD}$). The variables $\mathsf{CW}$, $\mathsf{FL}$, and $\mathsf{RW}$ reflect width measurements and are taken to be the response variables, with $\mathsf{CL}$ and $\mathsf{BD}$ as the predictor variables. Based on the two binary variables, sex and colour, there are four known classes within these data. The algorithms are run for $G\in\{1,\ldots,9\}$ and the results are summarized in Table~\ref{crabstable}. The selected eMCWM model is a four-component model with an ARI of 0.82 (Table~\ref{classcrabs}). Note that the difference between the chosen eMCWM model and the model with the second best BIC value is $\approx$ 0.6 BIC points. This latter model (EEE-EEE; 56 degrees of freedom) also picked a four-component model, with an ARI of 0.84.
\begin{table}[htb]
\caption{Comparison of the performance of the models applied to the crabs data.\label{crabstable}}
\begin{tabular*}{\textwidth}{@{\extracolsep{\fill}}lrrrr}
\hline 
Algorithm & Model & $G$ & ARI & Parameters\\
\hline
eMCWM & EEE-EVE & 4 & 0.82 & 59 \\
eFMR & VVI & 2 & 0.40 & 25 \\
eFMRC & EEE & 4 & 0.83 & 51\\
FMR &  & 2 & 0.40 & 25 \\
FMRC &  & 3 & 0.69 & 42\\
\texttt{mixture} & EEV & 4 & 0.78 & 68\\
\hline
\end{tabular*}
\end{table}

The chosen eFMR model is a two-component VVI model with an ARI of 0.40. 
Because the VVI model assumes independence between the response variables, it is equivalent to the \texttt{flexmix} FMR model; thus, unsurprisingly, the chosen FMR model is a two-component model with an ARI of 0.40 (Table~\ref{classcrabs}).
Note that the estimated classification from the selected two-component eFMR model leads to good separation between the sexes of the crabs. If the class membership agreement is estimated based only on the sexes of the crabs, an ARI of 0.81 is achieved. The selected FMRC model fit three components (Table~\ref{classcrabs}) with an ARI of 0.69. It basically pools together the blue males and blue females while putting the orange males and females in different clusters. eFMRC picked a four-component model with an ARI of 0.83. \texttt{mixture} picked a four-component model (Table~\ref{classcrabs}) with an ARI of 0.78. Note that, even though the ARIs achieved from the selected eMCWM, eFMRC, and \texttt{mixture} models are close, the \texttt{mixture} model estimates more parameters than the models that utilize linear dependencies between variables. Also, note that the performance of the eFMRC model should not be surprising. Recall that, in an FMRC model, $\pi_g(\bx)$ are modelled by a multinomial logit model \citep{desarbo1988}. \citet{anderson1972} noted that this multinomial condition is satisfied if the covariate densities $p(\bx)$ are assumed to be multivariate Gaussian with the same covariance matrices, i.e., the EEE model for the covariates \citep[cf.][]{ingrassia2012}. Hence, the eFMRC EEE model should give similar clustering results to the eMCWM EEE-EEE model. As pointed out above, these four-component models yield ARI values of 0.83 and 0.84, respectively.
\begin{table}[htb]
\caption{Cross-tabulation of true and estimated classifications for four methods applied to the crabs data.\label{classcrabs}}
\begin{tabular*}{1.0\textwidth}{@{\extracolsep{\fill}}lrrrrrrrrrrrrrrrrr}
            \hline
           & \multicolumn{4}{c}{eMCWM}&& \multicolumn{4}{c}{eFMRC} && \multicolumn{4}{c}{\texttt{mixture}} && \multicolumn{2}{c}{eFMR}\\
      \hline 
		&1&2&3&4&&1&2&3&4&&1&2&3&4&&1&2\\
		\cline{2-5}\cline{7-10}\cline{12-15}\cline{17-18}
                 BM&39&11& & && 39 &11& & &&38&12& & &&46&4\\
                 BF& &50& & &&  &50 & & && &49& &1&&4&46\\
                 OM& & &50& && & &50& && & &50& &&50& \\
                 OF& & &4&46&& & &3&47&& & &5&45&&2&48\\
			\hline
\end{tabular*}\\
{\scriptsize ``B'', ``O'', ``M'', and ``F'' refer to blue, orange, male, and female, respectively.}
\end{table}

\section{Discussion}
\label{sec:discussion}

A novel family of cluster-weighted models called the eMCWM family is presented. These models can account for heterogeneous regression data with multivariate correlated responses.
The distribution of the covariates is also explicitly incorporated in the likelihood, which to the authors' knowledge is also novel in multivariate response regression methodologies. 
This allows for separate imposition of an eigen-decomposed structure on the component covariance matrices of both the responses and the covariates. Hence, eMCWM can handle data where $\bX$ and $\bY|\bx$ might have different covariance structures. For this family, identification conditions are also provided.

The eMCWM family is parsimonious. Note that the completely unconstrained GPCM model (VVV) fits the same number of parameters as the completely unconstrained eMCWM model (VVV-VVV).  
The family's performance is investigated on simulated and real benchmark data, where  
the BIC and the ICL are in agreement with each other for all but the iris data. 
More generally, the eMCWM performed better in comparison to the FMR and FMRC models because it explicitly uses the distribution of the covariates. Not using that information led to estimated clusters (from the eFMR and eFMRC as well as the {\tt flexmix} FMR and FMRC models) that did not agree with the observed grouping of the data. In comparison to the GPCMs, taking into account the regression structure by use of linear dependencies aided in better clustering performance on some benchmark data sets. As implemented in \texttt{flexmix}, FMR only models the distribution of the (assumed independent) $\bY|\bx$, while FMRC models both the distribution of (assumed independent) $\bY|\bx$ and a logistic model of the covariates, respectively. GPCMs and their analogues for other distributions \citep[e.g.,][]{andrews2012,Punz:McNi:Robu:2013,vrbik14,dang2015b} do not account for linear dependencies and rely on modelling the data directly with an appropriate distribution. However, the eMCWM family models both linear dependencies and the distribution of the covariates. This results in better clustering performance when a clear regression relationship exists between the variables. Numerically, the EM algorithm is stable. However, to prevent fitting small components with low generalized variance (determinant of the covariance matrix) for the eMCWM family, the component sizes are computed before each M-step and a preset minimum size of the clusters is used \citep[cf.][]{mclachlan2000,celeux1988,flexmix}.

The framework presented lends itself to a straightforward extension to model-based classification \citep[e.g.,][]{mcnicholas10b,andrews11b} and discriminant analysis \citep{hastie96}. In the case of the eMCWM family, each response variable is currently regressed individually on a common set of predictor variables. This can be extended to take advantage of correlations between the response variables to improve predictive accuracy, in the fashion of the `curds and whey' method \citep{breiman1997}. Here, the Gaussian distribution is used for both the distribution of the covariates and the response for the eMCWM family. For heavier tailed data, mixtures of more robust distributions, such as the multivariate $t$ distribution \citep{andrews2012} or the multivariate power exponential distribution (that can also model lighter tailed data) \citep{dang2015b} may be employed. The use of continuous distributions may be restrictive and more work needs to be done to incorporate mixed type data for both responses and covariates \citep[see][for the univariate case $d=p=1$]{Punz:Ingr:Clus:2015}.


\vspace*{1pc}

\noindent {\bf{Conflict of Interest}}

\noindent {The authors have declared no conflict of interest.}

\appendix
\section*{Appendix}

\section{Proof of Theorem~\ref{thm:identif}}
\label{subsec:identifiability}

\begin{proof} 
The proof builds upon results given in \citet{Henn:Iden:2000} and \citet{Ingr:Punz:Vitt:TheG:2014}.  
Consider the class of models defined in \eqref{eq:class_gencwm}. The equality
\begin{equation}
\sum^{G}_{g=1} \phi_d(\by|\bx, \bmu_{\bY}(\bx|\bbeta_g),\bSigma_{\bY g}) 
\phi_p(\bx|\bmu_{\bX g},\bSigma_{\bX g}) \pi_g
=
\sum^{\widetilde{G}}_{s=1} \phi_d(\by|\bx, \bmu_{\bY}(\bx|\widetilde{\bbeta}_s),\widetilde{\bSigma}_{\bY s}) 
\phi_p(\bx| \widetilde{\bmu}_{\bX s},\widetilde{\bSigma}_{\bX s})\widetilde{\pi}_s \label{eq:identif_cwm}
\end{equation}
will be proven to hold for almost all $\bx \in \mR^p$ and for all $\by \in \mR^d$ 
if and only if
$G = \widetilde{G}$, and for each $g \in \{1, \ldots, G\}$, there exists $s\in\{1,\ldots,G
\}$ such that $\bbeta_g = \widetilde{\bbeta}_s$,  $\bmu_{\bX g}=\widetilde{\bmu}_{\bX s}$, 
$\bSigma_{\bX g} = \widetilde{\bSigma}_{\bX s}$, $\bmu_{\bY g}=\widetilde{\bmu}_{\bY s}$, 
$\bSigma_{\bY g} = \widetilde{\bSigma}_{\bY s}$, and $\pi_g = \widetilde{\pi}_s$.

Integrating each side of \eqref{eq:identif_cwm} over $\mR^d$ yields
\begin{equation}
\sum^{G}_{g=1}  \phi_p(\bx|\bmu_{\bX g},\bSigma_{\bX g}) \pi_g= \sum^{\widetilde{G}}_{s=1}  \phi_p(\bx|\widetilde{\bmu}_{\bX s},\widetilde{\bSigma}_{\bX s}) \widetilde{\pi}_s.
\label{eq:identif_margin_cwm}
\end{equation}
Let
$$
p(\bx|\bmu_{\bX},\bSigma_{\bX},\bpi)= \sum^{G}_{g=1}  \phi_p(\bx|\bmu_{\bX g},\bSigma_{\bX g}) \pi_g
$$
and
$$
p(\bx|\widetilde{\bmu}_{\bX},\widetilde{\bSigma}_{\bX},\widetilde{\bpi})= \sum^{\widetilde{G}}_{s=1}  \phi_p(\bx|\widetilde{\bmu}_{\bX s},\widetilde{\bSigma}_{\bX s}) \widetilde{\pi}_s,
$$ 
where $\bmu_{\bX}=\left\{\bmu_{\bX g}; \, g=1,\ldots,G\right\}$, $\bSigma_{\bX}=\left\{\bSigma_{\bX g}; \,  g=1,\ldots,G\right\}$, and $\bpi=\{\pi_g; \,  g=1,\ldots,G \}$. Analogous notation applies for $\widetilde{\bmu}_{\bX},\widetilde{\bSigma}_{\bX}$, and $\widetilde{\bpi}$.
Based on Bayes' theorem, 
\begin{equation}
 p(\Omega_g|\bx,\bmu_{\bX},\bSigma_{\bX},\bpi) =  \frac{ \phi_p(\bx|\bmu_{\bX g},\bSigma_{\bX g}) \pi_g }{p(\bx|\bmu_{\bX},\bSigma_{\bX},\bpi)},
 \label{P(Omega_g)|x_genCWM}
\end{equation}
for $g=1,\ldots,G$. 
Then, model~\eqref{mGCWM} can be rewritten as
\begin{align}
p(\bx,\by|\bvartheta) & =p(\bx|\bmu_{\bX},\bSigma_{\bX},\bpi) \sum^{G}_{g=1} \phi_d(\by|\bx, \bmu_{\bY}(\bx|\bbeta_g),\bSigma_{\bY g}) p(\Omega_g|\bx,\bmu_{\bX},\bSigma_{\bX},\bpi) 
\nonumber \\ & =
p(\bx|\bmu_{\bX},\bSigma_{\bX},\bpi) p(\by|\bx,\bvartheta),
\label{eq:mGCWM_2}
\end{align}
where
\begin{equation}
p(\by|\bx,\bvartheta)= \sum^{G}_{g=1}  \phi_d(\by|\bx,  \bmu_{\bY}(\bx|\bbeta_g),\bSigma_{\bY g})p(\Omega_g|\bx,\bmu_{\bX},\bSigma_{\bX},\bpi), \quad \by \in \mR^d.
\label{eq:mGCWM_reduc}
\end{equation}

Now, the class of models defined by \eqref{eq:mGCWM_reduc} for almost all $\bx\in \mR^p$, if the equality
\begin{equation*}
\sum^{G}_{g=1}  \phi_d(\by|\bx, \bmu_{\bY}(\bx|\bbeta_g),\bSigma_{\bY g})p(\Omega_g|\bx,\bmu_{\bX},\bSigma_{\bX},\bpi) 
= 
\sum^{\widetilde{G}}_{s=1} \phi_d(\by|\bx,\bmu_{\bY}(\bx|\widetilde{\bbeta}_s),\widetilde{\bSigma}_{\bY s}) p(\Omega_s|\bx,\widetilde{\bmu}_{\bX},\widetilde{\bSigma}_{\bX},\widetilde{\bpi})
\end{equation*}
implies $G = \widetilde{G}$, and for each $g \in \{1, \ldots, G\}$, there exists $s\in\{1,\ldots,G\}$ such that $\bbeta_g = \widetilde{\bbeta}_s$, $\bSigma_{\bY g} = \widetilde{\bSigma}_{\bY s}$, $\bmu_{\bX g}=\widetilde{\bmu}_{\bX s}$, $\bSigma_{\bX g} = \widetilde{\bSigma}_{\bX s}$, and $\pi_g=\widetilde{\pi}_s$.

Recall from Section \ref{sec:methodology} that the expected value $\bmu_{\bY}$ of $Y|\Omega_g$ is related to the covariates $\bX$ through the relation $\bmu_{\bY} = \bbeta'_g \bx^*$, $g =1, \ldots, G$. 
Let
\begin{align*}
\cX & =  \Bigl\{ \bx \in \mR^p: \mbox{ for each }g,j \in \{1, \ldots, G \}, \text{ and } s,t \in \{1, \ldots, \widetilde{G} \}: \Bigr. \nonumber \\ 
& \qquad \Bigl. \bbeta'_g \bx^*=\bbeta'_j\bx^*  
\ \Rightarrow\  
\bbeta_g=\bbeta_j, \Bigr. \nonumber  \\
& \qquad \Bigl. \bbeta'_g \bx^*=\widetilde{\bbeta}'_s\bx^*  
\ \Rightarrow\  
\bbeta_g=\widetilde{\bbeta}_s , \Bigr. \nonumber \\
& \qquad \Bigl. \widetilde{\bbeta}'_s\bx^*=\widetilde{\bbeta}'_t\bx^*  
\ \Rightarrow\   
\widetilde{\bbeta}_s=\widetilde{\bbeta}_t\Bigr\} .
\end{align*}
According to \eqref{eq:class_gencwm}, $(\bbeta_g,\bSigma_{\bY g}) \neq (\bbeta_j,\bSigma_{\bY j})$, $g \neq j$; thus, it follows that the quantities $(\bbeta'_g \bx^*, \bSigma_{\bY g})$, $g=1,\ldots, G$, are pairwise distinct for all $\bx \in \cX$ (indeed, the complement of $\cX$, i.e., $\mR^p \setminus \cX$, is formed by a finite set of hyperplanes of $\mR^p$ and, thus, $\mR^p \setminus \cX$ has null measure).

For any fixed $\bx \in \cX$,  according to \eqref{P(Omega_g)|x_genCWM}, 
$\{p(\Omega_1|\bx,\bmu_{\bX},\bSigma_{\bX},\bpi), \ldots, p(\Omega_G|\bx,\bmu_{\bX},\bSigma_{\bX},\bpi)\}$
and 
$\{p(\Omega_1|\bx,\widetilde{\bmu}_{\bX},\widetilde{\bSigma}_{\bX},\widetilde{\bpi}), \ldots, p(\Omega_{\widetilde{G}}|\bx,\widetilde{\bmu}_{\bX},\widetilde{\bSigma}_{\bX},\widetilde{\bpi}) \}$
are sets of positive numbers summing to one.
It follows that, for each $\bx \in \cX$, the density
$p(\by|\bx,\bvartheta)$ given in \eqref{eq:mGCWM_reduc}
is a mixture of distributions of  kind \eqref{eq:mixtglm_alpha} and then it is identifiable, due to the assumptions of the theorem.
Thus, $G = \widetilde{G}$ and there exists $s\in\{1,\ldots,G\}$ such that
\begin{equation}  
\bbeta_g = \widetilde{\bbeta}_s, \quad \bSigma_{\bY g} = \widetilde{\bSigma}_{\bY s} \quad \text{and} \quad p(\Omega_g|\bx,\bmu_{\bX},\bSigma_{\bX},\bpi)=p(\Omega_s|\bx,\widetilde{\bmu}_{\bX},\widetilde{\bSigma}_{\bX},\widetilde{\bpi}).
\label{eq:eq_beta_lambda}
\end{equation}
Moreover, because $p(\Omega_g|\bx,\bmu_{\bX},\bSigma_{\bX},\bpi)$ and $p(\Omega_s|\bx,\widetilde{\bmu}_{\bX},\widetilde{\bSigma}_{\bX},\widetilde{\bpi})$ 
are defined according to \eqref{P(Omega_g)|x_genCWM}, from  \eqref{eq:eq_beta_lambda} and \eqref{eq:identif_margin_cwm}, we get:
\begin{align*}
\pi_g & = \int_{\cX} \pi_g \phi_p(\bx|\bmu_{\bX g},\bSigma_{\bX g}) d \bx \\
&  = \int_{\cX} \frac{\pi_g \phi_p(\bx|\bmu_{\bX g},\bSigma_{\bX g})}{\sum^{G}_{g=1}  \phi_p(\bx|\bmu_{\bX g},\bSigma_{\bX g}) \pi_g} \left( \sum^{G}_{g=1}  \phi_p(\bx|\bmu_{\bX g},\bSigma_{\bX g}) \pi_g  \right) d \bx \\
&  = \int_{\cX} p(\Omega_g|\bx,\bmu_{\bX},\bSigma_{\bX},\bpi)   \left( \sum^{\widetilde{G}}_{s=1}  \phi_p(\bx|\widetilde{\bmu}_{\bX s},\widetilde{\bSigma}_s) \widetilde{\pi}_s \right) d \bx \\
&  = \int_{\cX} p(\Omega_s|\bx,\bmu_{\bX},\bSigma_{\bX},\bpi)   \left( \sum^{\widetilde{G}}_{s=1}  \phi_p(\bx|\widetilde{\bmu}_{\bX s},\widetilde{\bSigma}_{\bX s}) \widetilde{\pi}_s \right) d \bx \\
&  = \int_{\cX} \frac{\widetilde{\pi}_s \phi_p(\bx|\widetilde{\bmu}_{\bX s}, \widetilde{\bSigma}_{\bX s}) }{ \sum^{\widetilde{G}}_{t=1}  \phi_p(\bx|\widetilde{\bmu}_{\bX t}, \widetilde{\bSigma}_{\bX t}) \widetilde{\pi}_t} \left( \sum^{\widetilde{G}}_{s=1}  \phi_p(\bx|\widetilde{\bmu}_{\bX s},\widetilde{\bSigma}_{\bX s}) \widetilde{\pi}_s \right) d \bx \\
&  = \int_{\cX} \widetilde{\pi}_s \phi_p(\bx|\widetilde{\bmu}_{\bX s}, \widetilde{\bSigma}_{\bX s}) d \bx = \widetilde{\pi}_s . 
\end{align*}
Moreover, 
\begin{align*}
\phi_p(\bx|\bmu_{\bX g},\bSigma_g) & = \frac{p(\Omega_g|\bx,\bmu_{\bX g},\bSigma_{\bX g},\bpi) }{\pi_g}
\sum^{G}_{g=1}  \phi_p(\bx|\bmu_{\bX g},\bSigma_{\bX g}) \pi_g \\
& = \frac{p(\Omega_s|\bx,\widetilde{\bmu}_{\bX},\widetilde{\bSigma}_{\bX},\widetilde{\bpi})}{\widetilde{\pi}_s } 
 \sum^{\widetilde{G}}_{s=1}  \phi_p(\bx|\widetilde{\bmu}_{\bX s},\widetilde{\bSigma}_{\bX s}) \widetilde{\pi}_s
 = \phi_p(\bx|\widetilde{\bmu}_{\bX s}, \widetilde{\bSigma}_{\bX s}).
\end{align*}
From the identifiability of Gaussian distributions, again for the same  pair $(g,s)$ in \eqref{eq:eq_beta_lambda}, it follows that
\begin{equation*}
  \quad \bmu_{\bX g}=\widetilde{\bmu}_{\bX s} \quad \text{and} \quad \bSigma_{\bX g}=\widetilde{\bSigma}_{\bX s}, \label{eq:identif_G_Gauss}
\end{equation*}
and this completes the proof. 
\end{proof}

\section{M-step} 
\label{mstepderivation}

\paragraph{Derivation of $\hat{\bbeta}_{g}^{(k+1)}$:} For the estimate of the regression coefficients $\hat{\bbeta}_{g}^{(k+1)}$, $g=1,\ldots,G$:
\begin{equation*}
\sum_{i=1}^N  \sum_{g=1}^G \hat{z}_{ig}^{(k)} \frac{\partial Q_{1}\left(\bbeta_g,\bSigma_{\bY g}|\bvartheta^{(k)}\right)} {\partial \bbeta'_{g}}  = \boldsymbol{0}',
\end{equation*}
which implies
\begin{equation*}
\frac{\partial \left\{ \displaystyle\sum_{i=1}^N  \sum_{g=1}^G \frac{-\hat{z}_{ig}^{(k)}}{2} \left[ \left(\by_i-\bbeta'_g \bx_i^*\right)' \bSigma_{\bY g}^{-1} \left(\by_i-\bbeta'_g \bx_i^*\right)\right] \right\}}{\partial \bbeta'_{g}}  = \boldsymbol{0}' ,
\end{equation*}
yielding
\begin{equation*}
\frac{\partial \left[\displaystyle\sum_{i=1}^N  \sum_{g=1}^G \frac{-\hat{z}_{ig}^{(k)}}{2} \left( -\by'_i\bSigma_{\bY g}^{-1}\bbeta'_{g}\bx_i^* - \bx_i^{*'} \bbeta_{g}\bSigma_{\bY g}^{-1}\by_i + \bx_i^{*'} \bbeta_{g}\bSigma_{\bY g}^{-1}\bbeta'_{g}\bx_i^* \right) \right]}{\partial \bbeta'_{g}}  = \boldsymbol{0}'. 
\end{equation*}
Using properties of trace and transpose, we get
\begin{align*}
\frac{\partial \left\{\displaystyle\sum_{i=1}^N  \sum_{g=1}^G \frac{ \hat{z}_{ig}^{(k)}}{2} \left[ \mbox{tr}\left(\by'_i\bSigma_{\bY g}^{-1}\bbeta'_{g}\bx_i^*\right) + \mbox{tr}\left( \bx_i^{*'} \bbeta_{g}\bSigma_{\bY g}^{-1}\by_i \right) - \mbox{tr}\left( \bx_i^{*'} \bbeta_{g}\bSigma_{\bY g}^{-1}\bbeta'_{g}\bx_i^* \right) \right] \right\}}{\partial \bbeta'_{g}}  & = \boldsymbol{0}' \\
\frac{\partial \left\{\displaystyle\sum_{i=1}^N  \sum_{g=1}^G \frac{\hat{z}_{ig}^{(k)}}{2} \left[ \mbox{tr}\left(\bbeta'_{g}\bx_i^*\by'_i\bSigma_{\bY g}^{-1}\right) + \mbox{tr}\left( \left( \bSigma_{\bY g}^{-1}\by_i \bx_i^{*'}\right)' \bbeta'_{g} \right) - \mbox{tr}\left( \bbeta'_{g}\bx_i^* \bx_i^{*'} \bbeta_{g}\bSigma_{\bY g}^{-1} \right) \right] \right\}}{\partial \bbeta'_{g}}  & = \boldsymbol{0}'. 
\end{align*}
Taking the derivative, we obtain
\begin{align*}
\sum_{i=1}^N \frac{\hat{z}_{ig}^{(k)}}{2}\left\{\bSigma_{\bY g}^{-1} \by_i \bx_i^{*'}  + \bSigma_{\bY g}^{-1}\by_i \bx_i^{*'} - \left[\left(\bSigma_{\bY g}^{-1}\right)'\bbeta'_{g} \bx_i^* \bx_i^{*'} + \bSigma_{\bY g}^{-1}\bbeta'_{g} \bx_i^* \bx_i^{*'}\right] \right\} &= \boldsymbol{0}', 
\end{align*}
and finally
\begin{align*}
\hat{\bbeta}_{g}^{(k+1)'} &= \left(\sum_{i=1}^N \hat{z}_{ig}^{(k)}  \by_i \bx_i^{*'}\right) \left({\sum_{i=1}^N \hat{z}_{ig}^{(k)}  \bx_i^* \bx_i^{*'}}\right)^{-1}. 
\end{align*}
\paragraph{Derivation of $\hat{\bSigma}_{\bY g}^{(k+1)}$:} For the estimate of the covariance matrix $\hat{\bSigma}_{\bY g}^{(k+1)}$, $g=1, \ldots,G$:
\begin{equation*}
\sum_{i=1}^N \hat{z}_{ig}^{(k)} \frac{\partial Q_{1}\left(\bbeta^{(k+1)}_g,\bSigma_{\bY g}|\bvartheta^{(k)}\right)} {\partial \bSigma^{-1}_{\bY g}}=\boldsymbol{0}',
\end{equation*}
leading to 
\begin{align*}
\frac{\partial \left\{\displaystyle\sum_{i=1}^N \sum_{g=1}^G \frac{\hat{z}_{ig}^{(k)}}{2} \left[ \log |\bSigma_{\bY g}^{-1}| - \mbox{tr}\left( \left(\by_i-\bbeta^{(k+1)'}_g \bx_i^*\right)' \bSigma_{\bY g}^{-1} \left(\by_i-\bbeta^{(k+1)'}_g \bx_i^*\right) \right)\right] \right\}}{\partial \bSigma^{-1}_{\bY g}} &= \boldsymbol{0}', \\
\frac{\partial \left\{\displaystyle\sum_{i=1}^N \sum_{g=1}^G \frac{\hat{z}_{ig}^{(k)}}{2} \left[ \log |\bSigma_{\bY g}^{-1}| - \mbox{tr}\left(\bSigma_{\bY g}^{-1} \left(\by_i-\bbeta^{(k+1)'}_g \bx_i^*\right) \left(\by_i-\bbeta^{(k+1)'}_g \bx_i^*\right)' \right) \right] \right\}}{\partial \bSigma^{-1}_{\bY g}} &= \boldsymbol{0}'.
\end{align*}
Taking the derivative, we get
\begin{displaymath}
\sum_{i=1}^N \frac{\hat{z}_{ig}^{(k)}}{2} \left\{ \left(\bSigma_{\bY g}^{-1}\right)^{-1'} - \left[\left(\by_i-\bbeta^{(k+1)'}_g \bx_i^*\right) \left(\by_i-\bbeta^{(k+1)'}_g \bx_i^*\right)'\right]' \right\} = \boldsymbol{0}',
\end{displaymath}
and this results in
\begin{equation*}
\hat{\bSigma}_{\bY g}^{(k+1)} = \frac{\displaystyle\sum_{i=1}^N \hat{z}_{ig}^{(k)}  \left(\by_i-\hat{\bbeta}_{g}^{(k+1)'} \bx_i^*\right)\left(\by_i-\hat{\bbeta}_{g}^{(k+1)'} \bx_i^*\right)'}{\displaystyle\sum_{i=1}^N \hat{z}_{ig}^{(k)}}. 
\end{equation*}

\section{Regression Coefficients for Simulation 3} 
\label{sim3betas}

Regression coefficients used to generate data for group 1 for Simulation 3:
$$
\begin{pmatrix*}[r]
-0.63 & 0.18 & -0.84 & 1.60 & 0.33 & -0.82 & 0.49 & 0.74 & 0.58 & -0.31 \\ 
1.51 & 0.39 & -0.62 & -2.21 & 1.12 & -0.04 & -0.02 & 0.94 & 0.82 & 0.59 \\ 
0.92 & 0.78 & 0.07 & -1.99 & 0.62 & -0.06 & -0.16 & -1.47 & -0.48 & 0.42 \\ 
1.36 & -0.10 & 0.39 & -0.05 & -1.38 & -0.41 & -0.39 & -0.06 & 1.10 & 0.76 \\ 
-0.16 & -0.25 & 0.70 & 0.56 & -0.69 & -0.71 & 0.36 & 0.77 & -0.11 & 0.88 \\ 
0.40 & -0.61 & 0.34 & -1.13 & 1.43 & 1.98 & -0.37 & -1.04 & 0.57 & -0.14 \\ 
2.40 & -0.04 & 0.69 & 0.03 & -0.74 & 0.19 & -1.80 & 1.47 & 0.15 & 2.17 \\ 
0.48 & -0.71 & 0.61 & -0.93 & -1.25 & 0.29 & -0.44 & 0.00 & 0.07 & -0.59 \\ 
-0.57 & -0.14 & 1.18 & -1.52 & 0.59 & 0.33 & 1.06 & -0.30 & 0.37 & 0.27 \\ 
-0.54 & 1.21 & 1.16 & 0.70 & 1.59 & 0.56 & -1.28 & -0.57 & -1.22 & -0.47 \\
\end{pmatrix*}
$$

\noindent Regression coefficients used to generate data for group 2 for Simulation 3:
$$
\begin{pmatrix*}[r]
-0.62 & 0.04 & -0.91 & 0.16 & -0.65 & 1.77 & 0.72 & 0.91 & 0.38 & 1.68 \\ 
-0.64 & -0.46 & 1.43 & -0.65 & -0.21 & -0.39 & -0.32 & -0.28 & 0.49 & -0.18 \\ 
-0.51 & 1.34 & -0.21 & -0.18 & -0.10 & 0.71 & -0.07 & -0.04 & -0.68 & -0.32 \\ 
0.06 & -0.59 & 0.53 & -1.52 & 0.31 & -1.54 & -0.30 & -0.53 & -0.65 & -0.06 \\ 
-1.91 & 1.18 & -1.66 & -0.46 & -1.12 & -0.75 & 2.09 & 0.02 & -1.29 & -1.64 \\ 
0.45 & -0.02 & -0.32 & -0.93 & -1.49 & -1.08 & 1.00 & -0.62 & -1.38 & 1.87 \\ 
0.43 & -0.24 & 1.06 & 0.89 & -0.62 & 2.21 & -0.26 & -1.42 & -0.14 & 0.21 \\ 
2.31 & 0.11 & 0.46 & -0.08 & -0.33 & -0.03 & 0.79 & 2.08 & 1.03 & 1.21 \\ 
-1.23 & 0.98 & 0.22 & -1.47 & 0.52 & -0.16 & 1.46 & -0.77 & -0.43 & -0.93 \\ 
-0.18 & 0.40 & -0.73 & 0.83 & -1.21 & -1.05 & 1.44 & -1.02 & 0.41 & -0.38 \\ 
\end{pmatrix*}
$$


\end{document}